\documentclass{ieeetmlcn}

\usepackage{bbm}
\usepackage{amsmath,amssymb,amsfonts}
\usepackage{algorithm,algorithmic}
\usepackage{nicefrac}
\usepackage{bm}
\usepackage{hyperref}
\hypersetup{hidelinks=true} %
\usepackage{textcomp} %
\usepackage{xcolor} %
\usepackage{cite}
\usepackage{graphicx}
\usepackage[font=small]{caption}
\usepackage{tikz}
\usetikzlibrary{positioning}
\usepackage{pgfplots}
\usepackage{epstopdf}
\usepackage[labelformat=simple]{subcaption}
\usepackage{multirow}
\usepackage{lipsum}
\usepackage{mathtools}
\usepackage{tablefootnote}

\def\BibTeX{{\rm B\kern-.05em{\sc i\kern-.025em b}\kern-.08em
		T\kern-.1667em\lower.7ex\hbox{E}\kern-.125emX}}
\AtBeginDocument{\definecolor{tmlcncolor}{cmyk}{0.93,0.59,0.15,0.02}\definecolor{NavyBlue}{RGB}{0,86,125}}

\long\def\comment#1{}

\newtheorem{lemma}{Lemma}

\newtheorem{theorem}{Theorem}

\def\authorrefmark#1{\ensuremath{^{\textbf{#1}}}}

\input{mysymbol.sty}

\newcommand{\update}[1]{#1}


\receiveddate{09 April, 2024}
\reviseddate{12 July, 2024}
\accepteddate{04 September, 2024}
\publisheddate{XX Month, XXXX}
\currentdate{XX Month, XXXX}
\doiinfo{TMLCN.2024.XXXXXXX}


\title{Biased Backpressure Routing Using Link Features and Graph Neural Networks}

\author{Zhongyuan Zhao\authorrefmark{1}, Member, IEEE, Bojan Radojičić\authorrefmark{2}, Gunjan Verma\authorrefmark{3},\\ Ananthram Swami\authorrefmark{3}, Life Fellow,~IEEE, and Santiago Segarra\authorrefmark{1}, Senior Member, IEEE}
\affil{Department of Electrical and Computer Engineering, Rice University, Houston, TX 77005 USA}
\affil{Faculty of Technical Sciences, University of Novi Sad, Novi Sad 21000, Serbia (Work was done as a visiting scholar at Rice University)}
\affil{US Army’s DEVCOM Army Research Laboratory (ARL), Adelphi, MD 20783 USA}
\corresp{Corresponding author: Zhongyuan Zhao (email: zhongyuan.zhao@rice.edu).}
\authornote{Research was sponsored by the DEVCOM ARL Army Research Office and was accomplished under Cooperative Agreement Numbers W911NF-19-2-0269 \update{and W911NF-24-2-0008}. 
	The views and conclusions contained in this document are those of the authors and should not be interpreted as representing the official policies, either expressed or implied, of the Army Research Office or the U.S. Government. 
	The U.S. Government is authorized to reproduce and distribute reprints for Government purposes notwithstanding any copyright notation herein. Preliminary results were presented in~\cite{zhao2023icassp,zhao2023enhanced}.
	The source code is available at \url{https://github.com/zhongyuanzhao/dutyBP}
}


\pgfplotsset{compat=1.16}
\begin{document}

\markboth{}{Zhao \MakeLowercase{\textit{et al.}}: Biased Backpressure Routing Using Link Features and Graph Neural Networks}


\begin{abstract}
To reduce the latency of Backpressure (BP) routing in wireless multi-hop networks, we propose to enhance the existing shortest path-biased BP (SP-BP) and sojourn time-based backlog metrics, since they introduce no additional time step-wise signaling overhead to the basic BP.
Rather than relying on hop-distance, we introduce a new edge-weighted shortest path bias built on the scheduling duty cycle of wireless links, which can be predicted by a graph convolutional neural network based on the topology and traffic of wireless networks.
Additionally, we tackle three long-standing challenges associated with SP-BP: optimal bias scaling, efficient bias maintenance, and integration of delay awareness. 
Our proposed solutions inherit the throughput optimality of the basic BP, as well as its practical advantages of low complexity and fully distributed implementation. 
Our approaches rely on common link features and introduces only a one-time constant overhead to previous SP-BP schemes, or a one-time overhead linear in the network size to the basic BP.
Numerical experiments show that our solutions can effectively address the major drawbacks of slow startup, random walk, and the last packet problem in basic BP, improving the end-to-end delay of existing low-overhead BP algorithms under various settings of network traffic, interference, and mobility.
\end{abstract}
\begin{IEEEkeywords}
Backpressure routing, MaxWeight scheduling, shortest path, queueing networks, last packet problem, graph neural networks.
\end{IEEEkeywords}
\maketitle

\vspace{-0.1in}
\section{INTRODUCTION}\label{sec:intro}

Backpressure (BP) routing~\cite{tassiulas1992} is a well-established algorithm for distributed routing and scheduling in wireless multi-hop networks.
Wireless multi-hop networks have been widely adopted in military communications, disaster relief, and wireless sensor networks, and are envisioned to support emerging applications such as connected vehicles, robotic swarms, Internet of Things, machine-to-machine communications, and 5G/6G (device-to-device, wireless backhaul, integrated access and backhaul, and non-terrestrial coverage) \cite{Lin06,sarkar2013ad,kott2016internet,Patriciello2016,akyildiz20206g,cisco2020,chen2021massive,Cudak2021}.
The distributed nature of BP algorithms  \cite{tassiulas1992,neely2005dynamic,georgiadis2006resource,neely2022stochastic,jiao2015virtual,cui2016enhancing,gao2018bias,Alresaini2016bp,ji2012delay,athanasopoulou2012back,hai2018delay,Rai2017loop,yin2017improving,ying2010combining,Ying2012scheduling,Ryu2012timescale,zhao2023icassp,zhao2023enhanced} enables wireless multi-hop networks to be self-organized without relying on infrastructure, and promotes scalability and robustness that are critical to many applications.
BP has also been adopted to maximize the throughput of TCP networks~\cite{Liaskos2023}.
In BP, each node maintains a separate queue for packets to each destination (also denominated as commodity), routing decisions are made by selecting the commodity that maximizes the differential backlog between the two ends of each link, and data transmissions are activated on a set of non-interfering links via MaxWeight scheduling~\cite{tassiulas1992,joo2012local,zhao2021icassp}.
This mechanism uses congestion gradients to drive data packets towards their destinations through all possible routes, avoiding congestion and stabilizing the queues in the network for any flow rate within the network capacity region -- a property known as \emph{throughput optimality}~\cite{tassiulas1992,neely2005dynamic,georgiadis2006resource,neely2022stochastic}. 


\begin{figure}
    \vspace{-0.15in}
    \centerline{
    \subfloat[]{
    \includegraphics[height=1.2in]{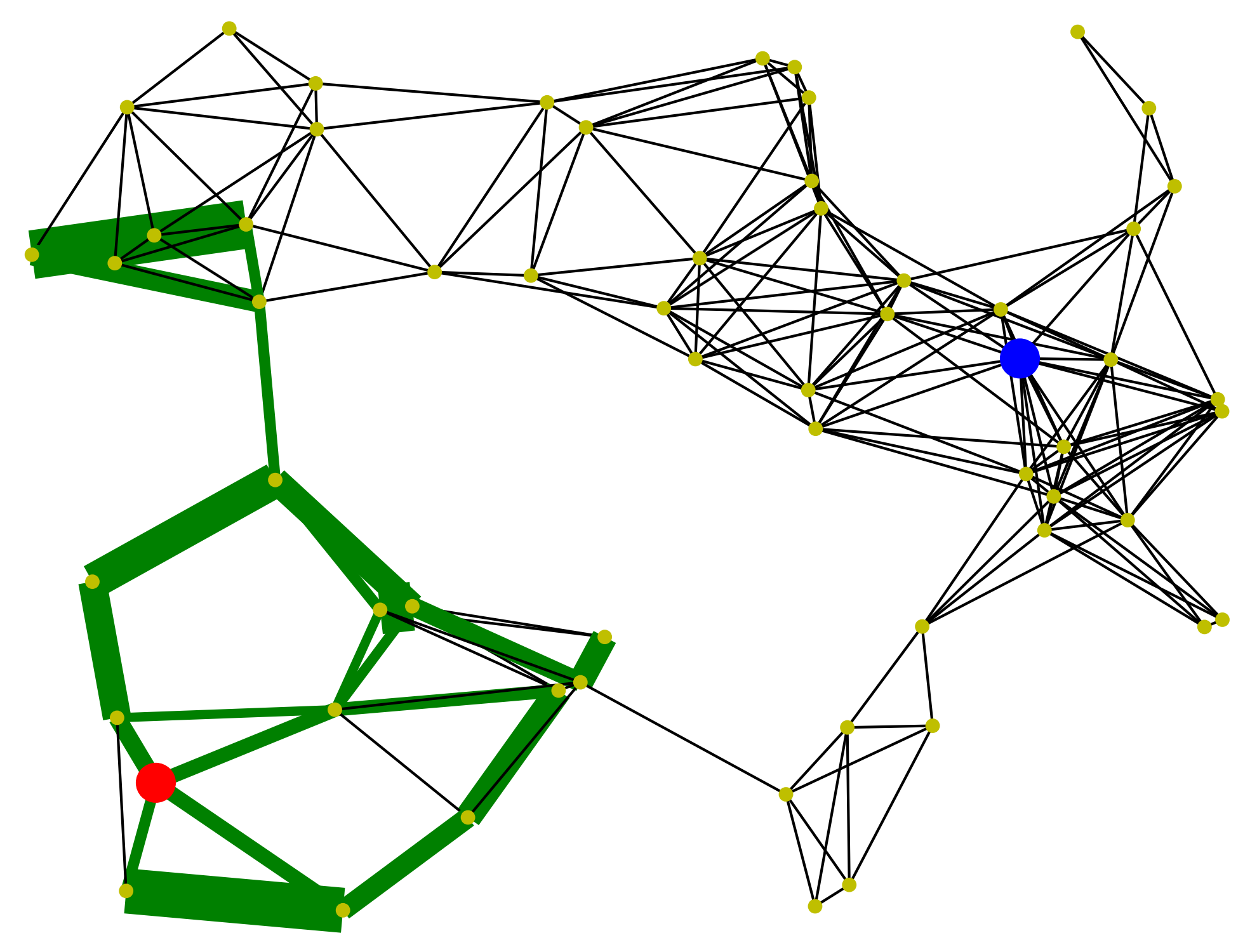}
    \label{fig:motivation:basic}\vspace{-0.1in}
    } 
    \subfloat[]{
    \includegraphics[height=1.2in]{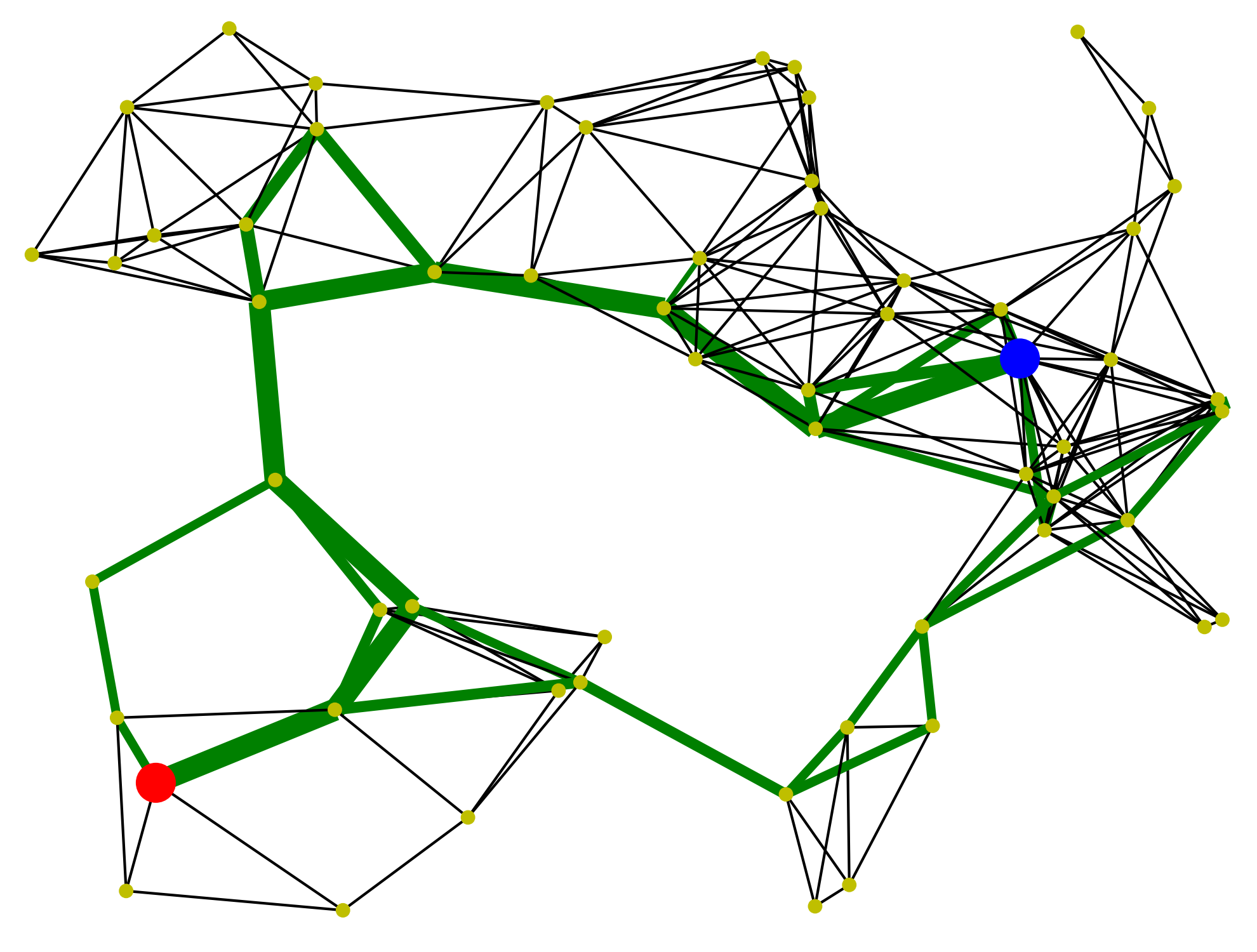}
    \label{fig:motivation:hop}\vspace{-0.1in}
    }
}
    \caption{A flow from the red node to the blue node in a wireless multi-hop network with 60 nodes. 
    The width of an edge is $1+\sqrt[3]{n}$, where $n$ is the number of packets sent over that link in 500 time steps; green edges indicate routes ($n>0$). 
    (a) Basic BP routing. (b) Enhanced dynamic BP routing (EDR) \cite{neely2005dynamic,georgiadis2006resource} with a pre-defined bias given by a scaled shortest hop distance from a node to the destination.    
    \label{fig:motivation}}
    \vspace{-0.1in}
\end{figure}

However, classical BP routing is known to suffer from poor delay performance, especially under low-to-medium traffic loads   \cite{neely2005dynamic,georgiadis2006resource,neely2022stochastic,jiao2015virtual,cui2016enhancing,gao2018bias}, 
exhibiting three drawbacks:
1)~\emph{slow startup}: when a flow starts, many packets have to be first backlogged to form stable queue backlog-based gradients, causing large initial end-to-end delay;
2)~\emph{random walk}: during BP scheduling, the fluctuations in queue backlogs drive packets toward random directions, causing unnecessarily long routes or loops;
and 
3)~the \emph{last packet problem} \cite{Alresaini2016bp,ji2012delay}: packets of a short-lived flow could remain enqueued in the network for a long time due to the absence of pressure.
The phenomena of slow startup and random walk in basic BP routing are illustrated by the example in Fig.~\ref{fig:motivation:basic}, in which packets from the red source node did not reach their blue destination in the first 500 time slots, but were trapped in two loops shown by the thickest green edges.

There are two categories of low-overhead BP variations for latency improvement: 
1)~Biased BP adds  pre-defined queue-agnostic biases, e.g., (functions of) shortest path distance \cite{neely2005dynamic,georgiadis2006resource,jiao2015virtual,zhao2023icassp,zhao2023enhanced} to the backlog metric.
Shortest path-biased BP (SP-BP) can mitigate the drawbacks of slow startup and random walk, while maintaining throughput optimality~\cite{neely2005dynamic,georgiadis2006resource,jiao2015virtual} at a low cost, i.e., a one-time communication overhead for bias computation.
As exemplified in Fig.~\ref{fig:motivation:hop}, two major routes around the empty central area can be quickly established by SP-BP \cite{neely2005dynamic,georgiadis2006resource}.
2)~Delay-based BP employ delay-based backlog metrics rather than queue length \cite{ji2012delay,cui2016enhancing,hai2018delay},
in order to address the last packet problem and to stabilize the queues in the network.
Importantly, the aforementioned approaches do not require additional communication overhead at every time step, unlike those based on queue state information (QSI) of the network \cite{cui2016enhancing,gao2018bias} or virtual queues \cite{athanasopoulou2012back,Alresaini2016bp} (see  Section~\ref{sec:review:bp} for 
a more detailed survey).
However, to get the most out of these approaches, careful parameter tuning is required, typically done via trial-and-error.

In this work, we aim to improve the SP-BP routing \cite{neely2005dynamic,georgiadis2006resource,zhao2023icassp} in the presence of short-lived traffic and network mobility, while retaining its low computational and communication overhead and throughput optimality.
Specifically, we seek to improve the widely-used shortest hop-distance bias  \cite{neely2005dynamic,georgiadis2006resource,jiao2015virtual,yin2017improving,ying2010combining} by considering  differences in the scheduling duty cycle of wireless links rather than treating all links equally. 
This can improve routing decisions by avoiding hot-spots with many interfering neighbors (i.e., high betweenness centrality \cite{Katsaros2010social}).
However, link duty cycles are deeply intertwined with routing and scheduling schemes, thus cannot be directly observed by passive monitoring.
As a result, we propose to predict them with a graph neural network (GNN)~\cite{wu2020comprehensive} informed by the interference topology and traffic statistics. 

Furthermore, we also address three longstanding challenges associated with SP-BP~\cite{zhao2023enhanced}: 
1)~how to optimize the scaling of the shortest path bias or edge weights; 
2)~how to efficiently update the shortest path bias under node mobility, e.g., nodes moving, joining or leaving the network; and 
3)~how to effectively incorporate delay-based backlog metrics into SP-BP in response to the last packet problem.


\noindent
{\bf Contribution.} The contributions of this paper are as follows:
\begin{itemize}
    \item \update{We enable the application of GNNs to enhance SP-BP by proposing a model-based framework and a customized offline reinforcement learning (RL) scheme. The GNN incorporates delay awareness into the shortest path biases by predicting link scheduling duty cycles.}
    \item We develop \update{principled approaches for optimal scaling and efficient maintenance of the shortest path biases. Optimal bias scaling mitigates random walk behavior and the last packet problem, while efficient bias maintenance makes SP-BP adaptive to network mobility with low overhead.} 
    \item We propose \update{\emph{expQ}~\cite{zhao2023enhanced}, a new delay-based backlog metric that prioritizes older packets without tracking packet sojourn time. The \emph{expQ} can be seamlessly integrated into SP-BP, and we further prove its throughput optimality.} 
    \item We numerically evaluate \update{our enhancements to SP-BP, demonstrating their effectiveness in improving the latency and packet delivery rate under varying information availability, as well as} their impact on network capacity.

\end{itemize}

\noindent
{\bf Notation:} 
The following notational convention is adopted in this paper: 
$ (\cdot)^\top $, $ \odot $, and $ |\cdot| $ represent the transpose operator,      Hadamard (element-wise) product operator, and the cardinality of a set, respectively.
$ \mathbbm{1}(\cdot) $ is the indicator function.
$ \mathbb{E}(\cdot) $ stands for expectation.
Upright bold lower-case symbol, e.g., $\bbz$, denotes a column vector, and $\bbz_i$ denotes the $i$-th element of vector $\bbz$. 
Upright bold upper-case symbol $\bbZ$ denotes a matrix, of which the element at row $i$ and column $j$ is denoted by $\bbZ_{ij}$, the entire row $i$ by $\bbZ_{i*}$, and the entire column $j$ by $\bbZ_{*j}$.

\section{RELATED WORK}

\subsection{BACKPRESSURE ROUTING}\label{sec:review:bp}
Backpressure (BP) routing and scheduling was first proposed in \cite{tassiulas1992}.
Four categories of BP variations have been developed to improve the latency of basic BP: 
1)~Biased BP adds pre-defined queue-agnostic biases, e.g., (functions of) shortest path distance \cite{neely2005dynamic,georgiadis2006resource,jiao2015virtual,zhao2023icassp,zhao2023enhanced} to the backlog metric.
SP-BP can mitigate the drawbacks of slow startup and random walk, while maintaining throughput optimality~\cite{neely2005dynamic,georgiadis2006resource,jiao2015virtual} at a low cost, i.e., a one-time communication overhead for bias computation.
2)~Delay-based BP replaces queue-length with delay metrics \cite{ji2012delay,cui2016enhancing,hai2018delay} for the backlog.
This approach addresses the last packet problem and maintains throughput optimality. 
3)~Impose restrictions on the routes~\cite{Rai2017loop,yin2017improving} or hop counts~\cite{ying2010combining} to prevent or reduce loops.
However, schemes in 2) and 3) can often shrink the network capacity region.
4)~Use queue-dependent biases that aggregate the queue state information (QSI) of the local neighborhood (or global QSI) to improve myopic BP decisions \cite{cui2016enhancing,gao2018bias} or use shadow queues \cite{athanasopoulou2012back,Alresaini2016bp} to dynamically increase the backpressure.
This, however, requires nodes to collect the neighborhood or global QSI at each time step, leading to high communications costs. 
In addition, most of the aforementioned approaches require careful parameter tuning, typically done via trial-and-error.
Our work seeks to enhance low-overhead SP-BP by incorporating link and topological information, integrating delay-based backlog metrics, and establishing principles for optimal parametric setting.

BP routing has also been extended to broader applications, such as wireless networks with intermittent connections \cite{Ryu2012timescale,Alresaini2016bp} or uncertain network states \cite{Ying2012scheduling}.
More broadly, BP routing has been modified to maximize the throughput of TCP networks~\cite{Liaskos2023} by replacing the backlog-based backpressure with other traffic statistics. 
Another related area is Universal Max Weight~\cite{Sinha2018}, which extends the BP schemes from uni-cast to anycast, multi-cast, and broadcast, and has been adopted in wireless software-defined networks (SDN)~\cite{Nguyen2022}.

\subsection{GNNS FOR WIRELESS RESOURCE ALLOCATION}

GNNs are powerful machine learning (ML) tools that can exploit the communications network structure while maintaining permutation equivariance, a key inductive bias for graph-based tasks, stating that the indexing of nodes in a graph shall not affect their representation~\cite{wu2020comprehensive}. 
As a result, GNNs have been recently applied for various resource allocation problems in wireless networks, including power allocation~\cite{chowdhury2021ml,chowdhury2021unfolding,chowdhury2023unfolding,li2022graph,eisen2020optimal,wang2022stable,shen2023graph,li2023learning} and link scheduling~\cite{zhao2021icassp,zhao2022icassp_a,zhao2022icassp_b,zhao2022twc,Lee2021graph}.
For routing, GNNs have been applied to enhance distributed combinatorial solvers for multicast routing, cluster head election, and virtual backbone establishment in wireless multihop networks~\cite{zhao2023graphbased}, \update{model-free routing algorithm in wired networks~\cite{Li2023gappo},} and to develop digital twins to efficiently predict the performance of routing schemes, replacing computationally costly network simulators~\cite{Rusek2020RouteNet,li2023learnable}.
GNNs have also been applied to inform decisions in computational offloading by processing the subtask dependency relationship and network context~\cite{Zhang2022fine,zhao2023congestionaware}.
To the best of our knowledge, this is the first work to apply GNNs to enhance BP routing and scheduling. 

While supervised learning has been utilized to train GNNs in some tasks like the digital twin~\cite{Rusek2020RouteNet,li2023learnable} for routing performance prediction, generally speaking, generating high-quality training labels can be computationally intractable since many resource allocation tasks are NP-hard~\cite{Lee2021graph}.
To address this challenge, schemes of unsupervised learning~\cite{chowdhury2021ml,chowdhury2021unfolding,chowdhury2023unfolding,li2022graph}, primal-dual learning~\cite{eisen2020optimal,wang2022stable,shen2023graph,li2023learning}, and reinforcement learning (RL)~\cite{zhao2021icassp,zhao2022icassp_a,zhao2022icassp_b,zhao2022twc,zhao2023graphbased,Li2023gappo} have been developed.
Following the methodology in~\cite{zhao2021icassp,zhao2022icassp_a,zhao2022icassp_b,zhao2022twc,zhao2023graphbased} of using GNNs to augment a conventional discrete algorithmic framework, we enhance the biased SP-BP scheme by link duty cycle predicted by GNNs. 
Moreover, our usage of the conflict graph in predicting link duty cycles is inspired by its use in link scheduling in~\cite{zhao2021icassp,zhao2022icassp_a,zhao2022icassp_b,zhao2022twc}. 
Our customized RL scheme enables GNNs to learn by interacting with the networking environment.
Even though our loss function is the mean-square-error between the predicted and empirical link duty cycles, our training approach differs from supervised learning since the predicted link duty cycles will affect  routing and scheduling decisions, and subsequently, the empirical link duty cycles.
However, unlike the RL schemes in~\cite{zhao2021icassp,zhao2022icassp_a,zhao2022icassp_b,zhao2022twc,zhao2023graphbased}, our loss function is decoupled from the system objective, i.e., the average end-to-end delay of data packets.
Thus, our RL scheme circumvents the credit assignment challenge in attributing the contribution of individual link duty cycles to the average end-to-end delay over a chain of SP-BP steps.

\begin{figure}[t!]
	\centering
	\includegraphics[width=0.99\linewidth]{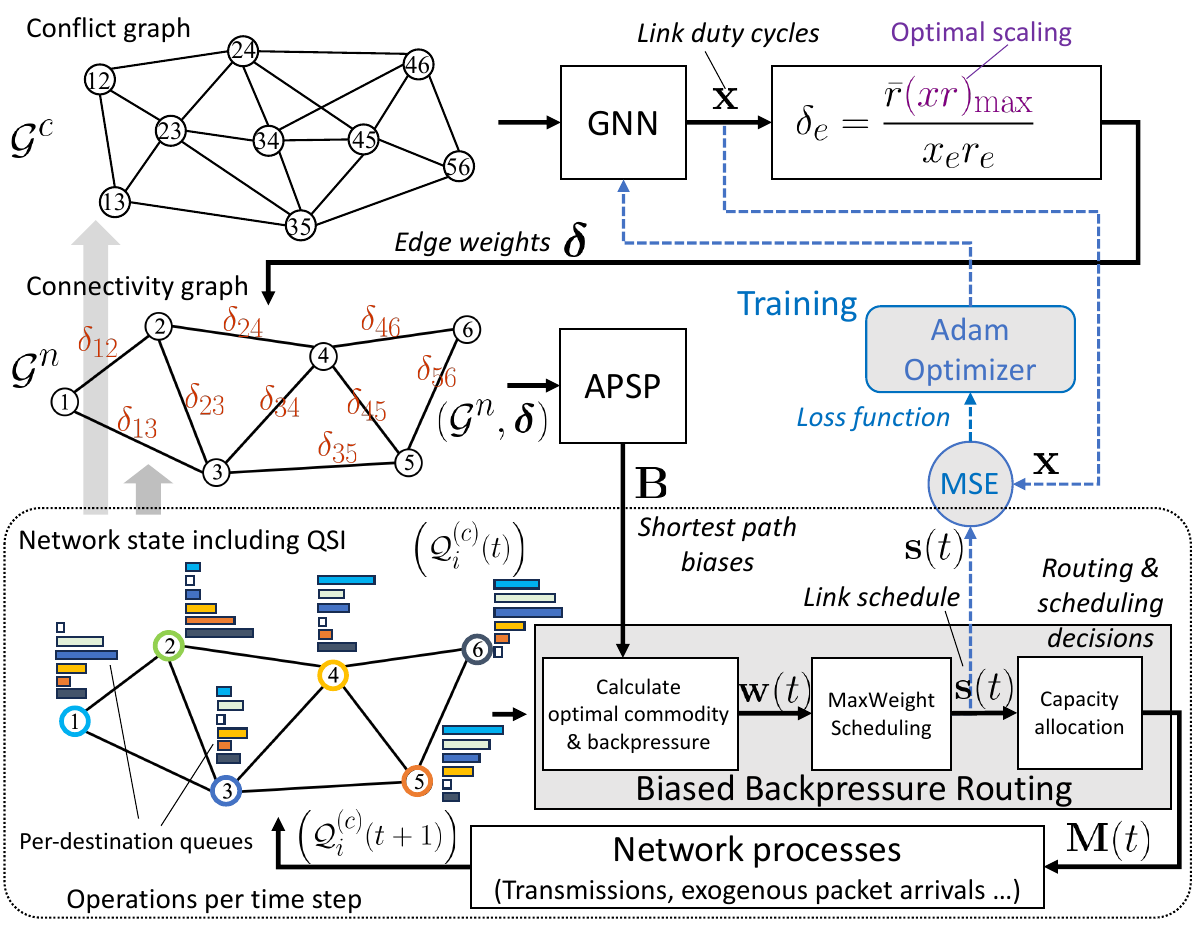}
	\caption{System diagram of our GNN-enhanced SP-BP scheme. }
	\vspace{-0.2in}
	\label{fig:system}
\end{figure}

\section{SYSTEM MODEL AND BACKPRESSURE ALGORITHM}
\label{sec:basic}

{To understand the system model and the basic SP-BP algorithm, the readers can refer to the overall architecture of our proposed SP-BP scheme as illustrated in Fig.~\ref{fig:system}.}

\subsection{SYSTEM MODEL}
\label{sec:sys}

We model a wireless multi-hop network as an undirected graph $\ccalG^{n}=(\ccalV, \ccalE)$, where $\ccalV$ is a set of nodes representing user devices in the network, and $\ccalE$ represents a set of links, where $e=(i,j)\in\ccalE$ for $i,j\in\ccalV$ represents that node $i$ and node $j$ can directly communicate.
$\ccalG^{n}$ is called a connectivity graph and assumed to be a connected graph, i.e., two arbitrary nodes in the network can always reach each other.
Notice that routing involves directed links, so we use ($\overrightarrow{i,j}$) to denote data packets being transmitted from node $i$ to node $j$ over link $(i,j)$.
There is a set of flows $\ccalF$ in the network, in which a flow $f=(i,c)\in\ccalF$, where $i\neq c$ and $i,c\in\ccalV$, describes the stream of packets from a source node $i$ to a destination node (or commodity) $c$, potentially through multiple links.
The medium access control (MAC) of the wireless network is assumed to be time-slotted orthogonal multiple access. 
Each time slot $t$ contains a stage of decision-making for routing and scheduling, followed by a second stage of data transmission.
Each node hosts multiple queues, one for each flow destined to commodity $c\in\ccalV$.
We use {$\ccalQ_{i}^{(c)}(t)$} to denote the QSI of commodity $c$ at node $i$ at the beginning of time slot $t$, and $Q_{i}^{(c)}(t)$ for its queue length.
{Like earlier works on BP~\cite{tassiulas1992,neely2005dynamic,georgiadis2006resource,neely2022stochastic,jiao2015virtual,cui2016enhancing,gao2018bias,Alresaini2016bp,ji2012delay,athanasopoulou2012back,hai2018delay,Rai2017loop,yin2017improving,ying2010combining,Ying2012scheduling,Ryu2012timescale}, we assume an infinite 
queue buffer length to study the impact of the proposed SP-BP on network congestion.}

Matrix $\bbR\in\mathbb{Z}_{+}^{|\ccalE|\times T}$ collects the (stochastic) real-time link rates, of which an element $\bbR_{e,t}$ represents the number of packets that can be delivered over link $e$ in time slot $t$.
The long term link rate of a link $e\in\ccalE$ is denoted by  $r_{e}=\mathbb{E}_{t\leq T}\left[\bbR_{e,t}\right]$, \update{vector of long term link rates $\bbr=\left[r_e|e\in\ccalE\right]$,}
and $\bar{r}=\mathbb{E}_{e\in\ccalE,t\leq T}\left[\bbR_{e,t}\right]$ is the network-wide average link rate.

To describe the conflict relationships between wireless links under orthogonal multiple access, we define \emph{conflict graph}, $\ccalG^c=(\ccalE,\ccalC)$, as follows: a vertex $e\in\ccalE$ represents a link in the original network, and the presence of an undirected edge $(e_1, e_2)\in\ccalC$ means that simultaneous communications on links $e_1, e_2\in\ccalE$ cause interference.
There are two popular models for the conflict relationship between two links: 
1)~Interface conflict, which describes that two links sharing the same node cannot be turned on simultaneously when each node is equipped with only one radio transceiver.
The interface conflict model captures scenarios in which all mobile devices are equipped with antennas with beamforming capabilities, such as mmWave/THz massive MIMO.
The conflict graph under interface conflict model is given by the line graph of the connectivity graph.
2)~Physical distance interference model~\cite{cheng2009complexity}, which arises when two links interfere with each other if their incident nodes are within a certain distance such that their simultaneous transmission will cause the outage probability to exceed a prescribed level.
We consider the unit-disk interference model that captures a simplified scenario in which all the mobile devices transmit at identical power levels with an omnidirectional antenna, and two links conflict with each other if they share the same node or if any of their nodes are within a pre-defined distance.
For the rest of this paper, we assume the conflict graph $\mathcal{G}^{c}$ to be known, e.g., by each link monitoring the wireless channel \cite{zhao2022twc}.
{See the lower left of Fig.~\ref{fig:system} for the QSI, connectivity graph $\ccalG^n$, and conflict graph $\ccalG^c$ of a small exemplary network.}

\subsection{BACKPRESSURE ALGORITHM}
\label{sec:bp}

BP routing and scheduling consist of 4 steps, as illustrated in the block at the lower left of Fig.~\ref{fig:system}.
In step 1, the optimal commodity $c_{ij}^{*}(t)$ on each {directed} link ($\overrightarrow{i,j}$) is selected as the one with the maximal backpressure, 
\begin{equation}\label{E:commodity}
    c_{ij}^{*}(t) =\argmax_{c\in\ccalV}\{ U_{i}^{(c)}(t) - U_{j}^{(c)}(t) \} \;,
\end{equation}
where $U_{i}^{(c)}(t)$ is the backlog metric, whose relationship with queue lengths is detailed in Section~\ref{sec:backlog}. 
For shorter notation, we define the backpressure of commodity $c$ on the directed link ($\overrightarrow{i,j}$) as  $U_{ij}^{(c)}(t) = U_{i}^{(c)}(t) - U_{j}^{(c)}(t)$.
In step 2, the maximum backpressure of ($\overrightarrow{i,j}$) is found as:
\begin{equation}\label{E:weight}
    w_{ij}(t) \!=\!\max\!\left\{ U_{i}^{(c_{ij}^*(t))}\!(t) \!-\! U_{j}^{(c_{ij}^*(t))}\!(t), 0 \right\}.
\end{equation}
In step 3, MaxWeight scheduling \cite{tassiulas1992} finds the schedule $\bbs (t)\in\{0,1\}^{|\ccalE|}$ to activate a set of \emph{non-conflicting links} achieving the maximum total utility, 
\begin{equation}\label{E:scheduling}
    \bbs (t) = \argmax_{\tilde{\bbs} (t)\in \ccalS } ~ \tilde{\bbs}(t)^\top  \left[\bbR_{*,t}\odot\tilde{\bbw}(t)\right] \;,
\end{equation}
where vector $\bbR_{*,t}$ collects the real-time link rate of all links, vector $\tilde{\bbw}(t)=\left[ \tilde{w}_{ij}(t) | (i,j)\in\ccalE\right]$,  and $\tilde{w}_{ij}=\max\{w_{ij}(t),w_{ji}(t)\}\update{\cdot\mathbbm{1}\!\!\left[\!Q_i^{(c_{ij}^*(t))}\!(t)\!>\!0\!\right]}$, and the direction of the link selected by the $\max$ function will be recorded for step 4.
In \eqref{E:scheduling}, $\ccalS$ denotes the set of all non-conflicting configurations, and 
 the per-link utility is  $u_{ij}(t)=\bbR_{ij,t}\tilde{w}_{ij}(t)$.
MaxWeight scheduling involves solving an NP-hard maximum weighted independent set (MWIS) problem \cite{joo2010complexity} on the conflict graph to find a set of non-conflicting links. 
In practice, \eqref{E:scheduling} can be solved approximately by distributed heuristics, such as local greedy scheduler (LGS)~\cite{joo2012local} and its GCN-based enhancement~\cite{zhao2022twc}.
In step 4,  all of the real-time link rate $\bbR_{ij,t}$ of a scheduled link is allocated to its optimal commodity $c_{ij}^{*}(t)$.
The final transmission and routing variables of commodity $c\in\ccalV$ on link ($\overrightarrow{i,j}$) is
\begin{equation}\label{E:quota}
    \mu_{ij}^{(c)}(t) \!=\! \begin{cases}
         \bbR_{ij,t}, & \text{if } c=c_{ij}^{*}(t), w_{ij}(t)>0, s_{ij}(t)=1, \\
         0, & \text{otherwise}.
    \end{cases}
\end{equation}

\subsection{(BIASED) BACKLOG METRICS}
\label{sec:backlog}

The queue length of commodity $c\in\ccalV$ on node $i\in\ccalV$ evolves as follows:
\begin{equation}\label{E:queue}
    Q_i^{(c)}\!(t\!+\!1\!) \!=\! \max\!\left[Q_i^{(c)}\!(t) \!-\! M_{i-}^{(c)}\!(t), 0\right] \! + \! M_{i+}^{(c)}\!(t) \!+\! A_i^{(c)}(t),
\end{equation}
where $M_{i-}^{(c)}(t)=\sum_{j\in\ccalV}\mu_{ij}^{(c)}(t)$ and $M_{i+}^{(c)}(t)=\sum_{j\in\ccalV}\mu_{ji}^{(c)}(t)$  are the numbers of packets of commodity $c$ respectively transmitted and received by node $i$ at time $t$;
$A_i^{(c)}(t)$ is the number of external packets of commodity $c$ injected into node $i$ at time $t$. We assume that the injection follows a random process with an arrival rate $\lambda_i^{(c)}=\mathbb{E}[A_i^{(c)}(t)]$ at source nodes, otherwise it is zero.

The general form of the backlog metric, i.e., $U_{i}^{(c)}(t)$ in \eqref{E:commodity}, in low-overhead BP schemes can be expressed as 
\begin{equation}\label{E:lowcost}
    {U}_{i}^{(c)}(t) = g\left(\ccalQ_{i}^{(c)}(t)\right) + B_{i}^{(c)}  \;,
\end{equation}
where $g(\cdot)$ is a function of QSI, and $ 0\leq B_{i}^{(c)}<\infty $ is a queue-agnostic bias.
In queue length-based BP, $g\bigl(\ccalQ_{i}^{(c)}(t)\bigr)=Q_{i}^{(c)}(t)$, whereas in delay-based BP, $g\bigl(\ccalQ_{i}^{(c)}(t)\bigr)$ can be the sojourn time of the head-of-line (HOL) packet  \cite{ji2012delay} or the entire queue, i.e., sojourn time backlog (SJB) \cite{hai2018delay}.

In unbiased BP, including the basic and delay-based BP, $B_{i}^{(c)}=0$.
In biased BP \cite{neely2005dynamic,georgiadis2006resource,jiao2015virtual,zhao2023icassp}, $B_{i}^{(c)}\geq0$ is defined based on the shortest path distance between node $i$ and destination $c$.
The matrix of all biases is denoted as $\bbB=\bigl(B_{i}^{(c)}\bigr)$ for all $i,c\in\ccalV$.
The shortest path distances can be computed by distributed algorithms for single source shortest path (SSSP) or all pairs shortest path (APSP) on a given network topology, {as illustrated in the middle of Fig.~\ref{fig:system}.}

\section{LINK DUTY CYCLE PREDICTION WITH GNNS}
\label{sec:gnn}

Existing SP-BP algorithms are based on the hop distance (minimal hop count) between two nodes in the networks, which implicitly assumes that every link contributes equally to the end-to-end delay of routing.
This, however, ignores the fact that the average delay of each link is different due to its network locality and link scheduling policy.
In MaxWeight scheduling, the likelihood of a link being scheduled depends on its local conflict or interference topology and the network traffic load.
We propose using a \update{graph convolutional neural network (GCNN)} to predict the expected delay on each link, which can serve to design a  shortest path bias between node $i$ and the destination node $c$ in~\eqref{E:lowcost} better than the shortest hop distance.
{The GNN and its training setup are illustrated in the upper right blocks of Fig.~\ref{fig:system}, while the optimal scaling of link weights is discussed in Section~\ref{sec:solution:scaling}.}
\update{Since our focus is on model-based framework that enables the applications of GNNs for enhancing SP-BP,   
we select GCNN for its simplicity and good generalizability, leaving the search of optimal GNN architectures and hyperparameters for the future work. 
}

\subsection{GRAPH CONVOLUTIONAL NEURAL NETWORKS}
\label{sec:gnn:gnn}

We propose to predict the link duty cycle $\bbx\in\reals^{|\ccalE|}$,  as $\bbx = \Psi_{\ccalG^c}(\boldsymbol{1};\mathbf{\bbomega})$, where  $\Psi_{\ccalG^c}$ is an $L$-layered featureless \update{GCNN} defined on the conflict graph $\ccalG^c$, with a collection of trainable parameters $\bbomega$.
We define the output of an intermediate $l$-th layer of the GCNN as $\bbX^l \in\reals^{|\ccalE|\times g_{l}}$, $\bbX^0 = \boldsymbol{1}^{|\ccalE|\times 1}$, $\bbx = \bbX^L_{*1} $ (the first column of $\bbX^L$),
and the $l$-th layer of the GCNN is expressed as
\begin{equation}\label{E:gcn}
	\mathbf{X}^{l} = \sigma_l\left(\mathbf{X}^{l-1}{\bbTheta}_{0}^{l}+\bbcalL \mathbf{X}^{l-1}{\bbTheta}_{1}^{l}\right), \; l\in\{1,\dots,L\}.
\end{equation}
In~\eqref{E:gcn},  
$\bbcalL$ is the normalized Laplacian of $\ccalG^c$, ${\bbTheta}_{0}^{l}, {\bbTheta}_{1}^{l} \in \mathbb{R}^{g_{l-1} \times g_{l}}$ are trainable parameters (collected in $\bbomega$), and $\sigma_l(\cdot)$ is the activation function of the $l$-th layer. 
The input and output dimensions are set as $g_{0}=1$ and $g_{L}=2$.
The activation functions of the input and hidden layers are selected as leaky ReLUs, whereas a node-wise softmax activation is applied at the output layer (each row of $\bbX^L$). 
{This design limits the scale of output $\bbx$ regardless of the density of input graph $\ccalG^c$.}
Since $\bbcalL$ in \eqref{E:gcn} is a local operator on $\ccalG^c$, each row of $\bbX^{l}$, e.g., $\bbX_{e*}^{l}, e\in\ccalE$ can be computed through neighborhood aggregation as the following local operation on link $e$, 
\begin{equation}\label{E:gcn:local}
    \bbX_{e*}^{l} \!=\! \sigma_l \! \left(\!\bbX_{e*}^{l-1} \, \bbTheta_{0}^{l} \!+\! \left[\! \bbX_{e*}^{l-1} \! - \!\!\!\!\!\sum_{u \in \mathcal{N}_{\ccalG^c}(e)}\!\!\!\frac{\bbX_{u*}^{l-1}}{\sqrt{d({e})d({u})}} \!\right]\!\bbTheta_{1}^{l}\! \right)\!,
\end{equation}
where $\bbX_{e*}^{l}\in\reals^{1\times g_{l}}$ captures the $l$th-layer features on $e$, $\mathcal{N}_{\ccalG^c}(e)$ denotes the set of (interfering) neighbors of $e$, and $d(\cdot)$ is the degree of a vertex in $\ccalG^c$.
Based on~\eqref{E:gcn:local}, the link duty cycle vector $\bbx$ can be computed in a fully distributed manner through $L$ rounds of local message exchanges between $e\in\ccalE$ and its neighbors, making our delay-enhanced BP routing a distributed algorithm.

\subsection{DELAY-AWARE EDGE WEIGHTS}
\label{sec:gnn:delay}

Depending on the availability of the long-term link rate $r_{e}$ and link duty cycle $x_e$ for any link $e\in\ccalE$, we consider the weight of each link for finding the shortest path distance between any pair of nodes to be defined in three possible ways:
$\delta_{e}=1/x_{e}$, $\delta_{e}=1/r_{e}$, or $\delta_{e}=\bar{r}/(x_{e}r_{e})$, where \update{$\bar{r}$ is the network-wide average link rate defined in Section~\ref{sec:sys}}.
\update{In addition,} these link weights can be optimally scaled by $\bar{r}/\delta_{\min}$ based on the principle in Section~\ref{sec:solution:scaling}.
\update{The computation of vector $\bbdelta=\left[\delta_{e}|e\in\ccalE\right]$ is denoted as a function $\bbdelta=f_w(\bbx, \bbr, \bar{r})$.}
By setting the edge weights of the connectivity graph $\ccalG^{n}$ as \update{$\bbdelta$}, bias $B_{i}^{(c)}$ is set as the \emph{weighted} shortest path distance between nodes $i$ and $c$ on $\ccalG^{n}$. 
This distance can be computed with distributed algorithms, such as weighted APSP in general or weighted SSSP when a new node joins the network.

\subsection{COMPLEXITY}
\label{sec:gnn:complexity}

For distributed implementation, the local communication complexity (defined as the rounds of local exchanges between a node and its neighbors) of the GNN is $\ccalO(L)$.
On weighted graphs, the distributed SSSP with the Bellman-Ford algorithm \cite{bellman1958routing,ford1956network} and state-of-the-art APSP~\cite{bernstein2019distributed} both take $\ccalO(|\ccalV|)$ rounds. 
On unweighted graphs, the distributed SSSP and APSP take $\ccalO(D)$ and $\ccalO(|\ccalV|)$  rounds, respectively, where $D$ is the diameter of the unweighted $\ccalG^{n}$.
Compared to the hop distance-based methods~\cite{neely2005dynamic,georgiadis2006resource}, our approach bears additional $\ccalO(L)$ rounds of communications (and larger message size). 
Notice that {the bias matrix} $\bbB$ can be reused over time slots until the network topology ($\ccalG^n$ or $\ccalG^c$) changes, 
which is critical for overhead reduction and scalability promotion.

\begin{algorithm}[!t]
\caption{\update{Pseudo-code for a training or testing instance}}
\label{algo:train}
\update{
\hspace*{\algorithmicindent} \textbf{Input:} $\ccalG^{n}$, $\ccalG^{c}, \ccalF,\bbA,\bbR, T; \bbomega $, training \\
\hspace*{\algorithmicindent} \textbf{Output:} performance metrics, $\bbomega$ (for training only)\\ \vspace{-4mm}
\begin{algorithmic}[1] 
\STATE $r_{e}=\mathbb{E}_{t\leq T}\left[\bbR_{e,t}\right]\; \forall\; e\in\ccalE$; $\bar{r}=\mathbb{E}_{e\in\ccalE,t\leq T}\left[\bbR_{e,t}\right] $
\STATE $ \bbx = \Psi_{\ccalG^c}(\boldsymbol{1};\mathbf{\bbomega}), \bbdelta = f_w(\bbx,\bbr,\bar{r}) $
\STATE Compute $\bbB$ from ($\ccalG^{n}, \bbdelta$) using APSP
\STATE $\ccalQ_{i}^{(c)}(0) = \phi, \forall\; i,c\in\ccalV $ ($\phi$: empty set)
\STATE $\ccalM =\phi$; $t=0$
\WHILE{$t < T$}
\FORALL{ $i\in \ccalV'$ }
\STATE Compute $U_{i}^{(c)}(t)$ based on \eqref{E:lowcost}
\STATE Compute $c_{ij}^*(t)\; \forall\; j\in\ccalN(i)$ based on \eqref{E:commodity}
\STATE Compute $w_{ij}(t)\; \forall\; j\in\ccalN(i)$ based on \eqref{E:weight}
\STATE Compute $\bbs(t)$ with MaxWeight scheduling in \eqref{E:scheduling}
\STATE Compute $\{\mu_{ij}^{(c)}(t) | i,j,c\in\ccalV \}$ based on \eqref{E:quota}
\ENDFOR
\STATE Queueing state evolution: update $ \{\ccalQ_{i}^{(c)}(t+1)|i,c\in\ccalV\} $ based on $ \{\ccalQ_{i}^{(c)}(t)\} $, $\bbA_{*t}, \bbR_{*t}$, and $\{\mu_{ij}^{(c)}(t) \}$
\IF{training}
\STATE $\ccalM \leftarrow \ccalM \cup \{\nabla_{\bbomega}\ell(\bbomega)\} $
\ENDIF
\ENDWHILE
\STATE Collect performance metrics
\IF{training}
\STATE Update $\bbomega$ based on gradient buffer $\ccalM$ with optimizer
\ENDIF
\end{algorithmic}
}
\end{algorithm}

\subsection{TRAINING}
\label{sec:gnn:train}

The parameters $\bbomega$ (collecting ${\bbTheta}_{0}^{l}$  and ${\bbTheta}_{1}^{l}$ across all layers $l$) of our GCNN are trained on a set of routing instances defined on random network processes drawn from a target distribution $\Omega$. 
More precisely, we draw several instances (indexed by $k$) of the network topology, flows, packet arrivals, and link rates $(\ccalG^{n}(k)$, $\ccalG^{c}(k), \ccalF(k),\bbA(k),\bbR(k)) \sim \Omega$. 
For every instance, \update{the training or testing procedure is detailed in Algorithm~\ref{algo:train} and explained as follows:} the GCNN first predicts the link duty cycle vector $\bbx(k)=\Psi_{\ccalG^c(k)}(\boldsymbol{1};\bbomega)$, then biases $\bbB(k)$ are generated by the APSP algorithm based on the link distance vector $\bbdelta(k)=\diag^{-1}(\bbx(k))\boldsymbol{1}$,
then we run the bias-based backpressure routing for $T$ time slots, and collect the schedules for each time slot, $\bbs^{k}(t)$, with the employed LGS scheduler~\cite{joo2012local}.
{As indicated by the blue modules in Fig.~\ref{fig:system},} we train the parameters $\bbomega$ of GCNN to minimize the following mean squared error loss 
\begin{equation}\label{E:sgd_loss}
\ell(\bbomega) \!=\! \mathbb{E}_\Omega \!\left\{\! \frac{1}{|\ccalE|}\! \left\| \bbX^{L}(k) \!-\! \frac{1}{T}\sum_{t=1}^{T}\! \left[\bbs^{k}(t),\boldsymbol{1} \!-\! \bbs^{k}(t)\right] \right\|_{2}^{2} \right\}.
\end{equation}
Intuitively, by minimizing the loss in~\eqref{E:sgd_loss}, we are choosing parameters $\bbomega$ such that the softmax output of our GCNN $\bbX^{L}$ is close to predicting the fraction of time that each link is scheduled.
We do not seek to learn this complex function for \emph{any specific} topology, but rather, we want to minimize the average error over instances drawn from $\Omega$.
In practice, we approximate the expected values in~\eqref{E:sgd_loss} with the corresponding empirical averages.
Indeed, with the collected experience tuples, we update the parameters $\bbomega$ of the GCNN after each training instance, through batch training with random memory sampling, employing the Adam optimizer. 
Notice that, once trained, the GCNN can be used to compute the delay-aware bias for previously unseen topologies without any retraining.
As long as the new topology, arrival rates, and link rates are similar to those observed in $\Omega$ during training, we illustrate in {Section~\ref{sec:results}} that the trained GCNN generalizes to new test instances not observed during training.

\section{IMPROVEMENTS TO BIASED BACKPRESSURE}
\label{sec:solution}

To further improve the latency and practicality of SP-BP schemes, we propose 
three additional enhancements.

\begin{figure}[t]
	\centering
	\subfloat[Time $t$]{
		\includegraphics[height=1.8in]{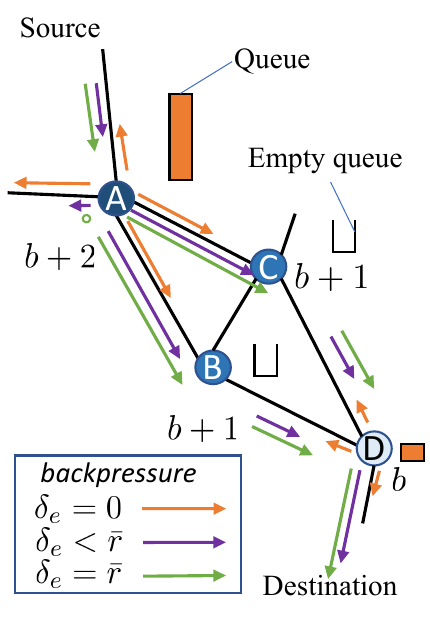}
		\label{fig:example:t0}\vspace{-0.1in}
	}
	\subfloat[Time $t+1$]{
		\includegraphics[height=1.8in]{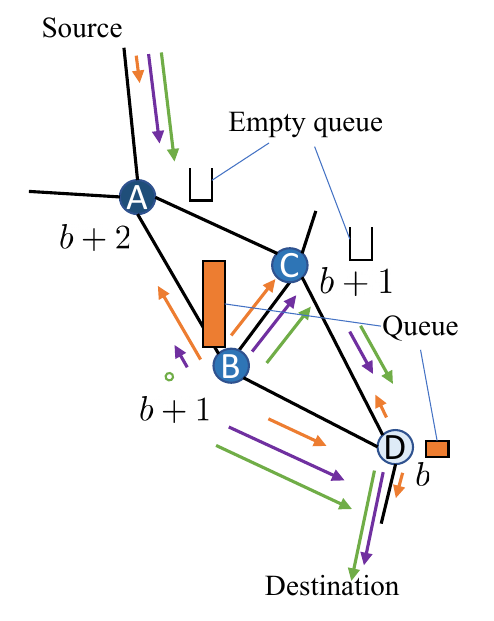}
		\label{fig:example:t1}\vspace{-0.1in}
	}
	\caption{{\small The queue states at four nodes (with hop distance to the destination marked) in an exemplary case of backpressure routing with a single commodity, at time (a) $t$ and (b) $t+1$. An arrow indicates the magnitude and direction of pressure on a link, and its color encodes the choice of the edge weight. All links have an identical $\bar{r}$ rate.
		}
	} 
	\label{fig:example}    
\end{figure}

\subsection{OPTIMAL SCALING OF EDGE WEIGHTS}
\label{sec:solution:scaling}

To the best of our knowledge, the constant scale of the edge weights $\delta_e, e\in \ccalE$ for computing the shortest path biases in SP-BP schemes are often selected via trail-and-error, such as a constant $\delta_{e}$ for all $e\in \ccalE$ in EDR \cite{neely2005dynamic} and $\delta_{e} = \bar{r}/(x_{e}r_{e})$ proposed in Section~\ref{sec:gnn:delay}.
Our goal is to provide a principled method of optimization for such scaling parameter(s).

For biased BP with low-to-medium traffic loads, including the last packets situation, a key observation is that bias scaling influences the random walk behaviors of packets and subsequently the end-to-end delay.
Consider an exemplary situation of last packets in BP routing with only one commodity (flow), for which the queueing states of four nodes in two consecutive time steps are illustrated in Fig.~\ref{fig:example}. 
The hop distances of nodes $A$, $B$, $C$, $D$ to the destination are $b+2$, $b+1$, $b+1$, $b$, respectively. 
Every link has a constant link rate of $\bar{r}$.
At time $t$, $Q_{A}^{(c)}(t)=\bar{r}$ and $Q_{B}^{(c)}(t)=Q_{C}^{(c)}(t)=0$, the backpressures point from $A$ towards $B$ and $C$. 
Assume that at the beginning of time slot $t+1$, all the packets on $A$ moved to $B$.
Different choices of $\delta_{e}$ would lead to different routing decisions at $t+1$.

For unbiased BP with $\delta_{e}=0$, the backpressures (orange arrows) are pointed from $B$ towards $A$, $C$, and $D$, and $U^{(c)}_{BA}(t+1) = U^{(c)}_{BC}(t+1) = \bar{r} > U^{(c)}_{BD}(t+1)$. 
At the end of time slot $t+1$, the packets on $B$ will move to either $A$ or $C$, packets on $D$ may also move back to $C$, causing an unwanted meandering of the packets.
For biased BP with $\delta_{e}>0$, the backpressures (magenta and green arrows) at time $t+1$ originating from $B$ are skewed towards the destination, $U^{(c)}_{BD}(t+1) > U^{(c)}_{BC}(t+1) =\bar{r} > U^{(c)}_{BA}(t+1) $.
At the end of time slot $t+1$, packets in $B$ will move forward to $D$.
In particular, when $\delta_e=\bar{r}$, the backpressure (i.e., green arrows) $U^{(c)}_{BA}(t+1)=0$.

The previous example can be formally described as follows.
Consider a wireless multi-hop network with homogeneous link rates, i.e., all links have a link rate of $\bar{r}$, 
there are two links $(i,j)$ and $(j,k)$, where nodes $j, k$ are on the shortest path from node $i$ to node $c$.
Further, consider a congestion-free last packets (CFLP) scenario: 
at time $t$, the last packets of commodity $c$ reside on node $i$, and 
link $(i,j)$ is congestion-free, i.e., $ 0< Q_{i}^{(c)}(t)=q\leq \bar{r}, Q_{j}^{(c)}(t)=0 $ and no external packets of commodity $c$ will arrive at nodes $i,j,k$ from the rest of the network or users. 
If link $(i,j)$ is scheduled at time $t$, such that $ Q_{i}^{(c)}(t+1)=0, Q_{j}^{(c)}(t+1)=q\leq \bar{r} $, we have the following Lemma.
\begin{lemma}\label{l:back}
In SP-BP routing under the CFLP scenario, the edge weight $\delta_e$ should be greater than or equal to the homogeneous link rate,  $\delta_e\geq\bar{r}$, to avoid the immediate reversal of the direction of backpressure for commodity $c$ on the scheduled link $(\overrightarrow{i,j})$ after the transmission. 
\end{lemma}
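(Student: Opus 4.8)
The plan is to compute the backpressure $U_{ij}^{(c)}(t+1)$ of commodity $c$ on the scheduled link $(\overrightarrow{i,j})$ right after the transmission and read off the condition that keeps it nonnegative. Throughout I use the biased backlog metric \eqref{E:lowcost} with the queue-length choice $g\bigl(\ccalQ_i^{(c)}(t)\bigr)=Q_i^{(c)}(t)$, so that $U_i^{(c)}(\tau)=Q_i^{(c)}(\tau)+B_i^{(c)}$ for every node and time, and I recall from \eqref{E:weight} that an ``immediate reversal'' of the direction of backpressure on $(\overrightarrow{i,j})$ is exactly the event $U_{ij}^{(c)}(t+1)<0$, i.e.\ the positive weight is then carried by $(\overrightarrow{j,i})$ instead.

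First I would record the queue states produced by the CFLP hypothesis: because the last packets of commodity $c$ all sat on $i$ and no external commodity-$c$ arrivals reach $i,j,k$, scheduling $(i,j)$ at time $t$ yields $Q_i^{(c)}(t+1)=0$ and $Q_j^{(c)}(t+1)=q$ with $0<q\le\bar r$, while all other commodity-$c$ queues stay empty. Second, since $j$ lies on a shortest path from $i$ to $c$ in the weighted connectivity graph $\ccalG^{n}$ with edge weights $\bbdelta$, optimal substructure of shortest paths gives the \emph{equality} (not merely the triangle inequality) $B_i^{(c)}=\delta_e+B_j^{(c)}$, where $\delta_e$ is the weight of the scheduled link $e=(i,j)$. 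Third, I substitute these into $U_{ij}^{(c)}(t+1)=\bigl(Q_i^{(c)}(t+1)+B_i^{(c)}\bigr)-\bigl(Q_j^{(c)}(t+1)+B_j^{(c)}\bigr)$ to obtain $U_{ij}^{(c)}(t+1)=\bigl(B_i^{(c)}-B_j^{(c)}\bigr)-q=\delta_e-q$; for contrast, the same substitution at time $t$ gives $U_{ij}^{(c)}(t)=q+\delta_e>0$, confirming that before the transmission the backpressure genuinely pointed from $i$ to $j$, so a reversal is meaningful to speak of.

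It then remains to impose $U_{ij}^{(c)}(t+1)=\delta_e-q\ge 0$. Since the design constant $\delta_e$ must work for every admissible CFLP instance and $q$ can be as large as $\bar r$, the binding case is $q=\bar r$, which yields the stated condition $\delta_e\ge\bar r$ (with $\delta_e=\bar r$ being exactly the borderline $U_{ij}^{(c)}(t+1)=0$ illustrated by the green arrows in Fig.~\ref{fig:example}). I do not expect a genuine obstacle: the claim reduces to this one-line computation once the bookkeeping is fixed. The only two points needing a sentence of care are (i) justifying the equality $B_i^{(c)}=\delta_e+B_j^{(c)}$ from ``$j$ is on the shortest path,'' and (ii) making explicit that, because $q$ is not known a priori, the principle must be stated at the rate level $\bar r$ rather than at the instance level $q$.
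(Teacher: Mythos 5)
Your proposal is correct and follows essentially the same argument as the paper: compute $U_{ij}^{(c)}(t)=q+\delta_e$ and $U_{ij}^{(c)}(t+1)=\delta_e-q$ using the queue-length backlog with shortest-path bias, then require $\delta_e-q\ge 0$ with the worst case $q=\bar r$. Your added care in justifying $B_i^{(c)}=\delta_e+B_j^{(c)}$ from the shortest-path structure and in stating the worst-case argument over $q$ only makes explicit what the paper's proof uses implicitly.
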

\vspace{-0.1in}
\begin{proof} 
In queue length-based SP-BP, the backpressure on link $(\overrightarrow{i,j})$ is  $U_{ij}^{(c)}\!(t) = Q_{ij}^{(c)}\!(t) + \big(B_{i}^{(c)}\!-\!B_{j}^{(c)}\big) = q+\delta_{e}$. 
Since link $(\overrightarrow{i,j})$ is scheduled at time $t$, we have that $U_{ij}^{(c)}\!(t+1)=\delta_{e}-q$. 
To avoid the direction of the backpressure on link $(i,j)$ being reversed at time $t+1$, i.e., $ U_{ij}^{(c)}\!(t\!+\!1)<0 $, we need to set $\delta_{e}\geq\bar{r}$.  
\end{proof}

In the described CFLP scenario, Lemma~\ref{l:back} ensures that the backpressure algorithm does not oscillate as the last packets travel through the shortest path when $\delta_e\geq\bar{r}$.
On the other hand, a smaller value of $\delta_e$ is preferable for path finding and congestion prevention.
Indeed, an extremely large SP bias scaling $\delta_e$ would force every packet through the shortest path, thus hindering any information from the queue lengths and defeating the original purpose of BP routing.
Combining these two observations, \emph{we advocate for setting the minimal edge weight as the average link rate}, i.e., $\underset{e\in\ccalE}{\min}\;\delta_e\vcentcolon=\bar{r}$. In particular, for EDR in~\cite{neely2005dynamic}, this implies setting $\delta_e =\bar{r}$.

Although the above discussion is based on the CFLP setting with all link rates equal to the average $\bar{r}$, in Section~\ref{sec:results}, we demonstrate experimentally the optimality of the above choice under heterogeneous link rates and general traffic settings.

\subsection{BIAS MAINTENANCE} 
Due to the mobility and distributed nature of networks, bias maintenance requires frequent or periodical SSSP and/or APSP computation.
Although the biases $\bbB$ can be re-used for many time slots to match the slowly changing topology, the overhead of bias maintenance can be high for large networks.

To address this issue, we propose {to adopt the neighborhood update rule from Dijkstra's algorithm~\cite{dijkstra1959note}} for bias maintenance at node $i\in\ccalV$, whenever one or more of its incident links are established or destroyed 
\begin{equation}\label{E:neighbor}
    B_{i}^{(c)}(t+1) = \begin{cases}
    \underset{j\in\ccalN_{\ccalG^n}(i)}{\min} \left[B_{j}^{(c)}(t) + \delta_{ij}(t) \right], \;& i \neq c \\
    0, & i = c
    \end{cases}\;,
\end{equation}
where $\delta_{ij}(t)$ is the edge weight between neighboring nodes $i$ and $j$.
This rule allows the shortest path bias to be updated within $\ccalO(D)$ steps of local message exchange.

\subsection{DELAY-AWARE BACKLOG METRIC: EXPQ}\label{sec:solution:expQ}
To prioritize old packets, we further introduce a SP-BP-friendly delay-aware QSI function $g(\cdot)$ in \eqref{E:lowcost}: 
\begin{equation}\label{E:expQ}
\begin{aligned}
    g\!\left(\!\ccalQ_{i}^{(c)}\!(t+1)\!\right)\!=&(1\!+\!\epsilon)g\!\left(\!\ccalQ_{i}^{(c)}(t)\!\right)\!\max\!\left[\!1\!-\!\frac{M_{i-}^{(c)}\!(t)}{Q_{i}^{(c)}(t)}, 0\right] \\
    & + M_{i+}^{(c)}\!(t) + A_i^{(c)}(t)\;,
\end{aligned}
\end{equation}
where $\epsilon\geq 0$ is a small constant parameter, and it follows that $ g\big(\ccalQ_{i}^{(c)}(t)\big)=Q_{i}^{(c)}(t) $ for $\epsilon=0$ and that $g\left(\varnothing\right)=0$ for an empty queue $\varnothing$.
With this definition, it is easy to find that $ g\big(\ccalQ_{i}^{(c)}(t)\big) \geq Q_{i}^{(c)}(t)$. 
Further restrictions on $\epsilon$ can be found in \eqref{Ea:eps} and the following discussion in Section~\ref{sec:proof:expQ}.

The $g(\cdot)$ in \eqref{E:expQ} is named \emph{expQ}, as it will increase exponentially over time if commodity $c$ at node $i$ is not scheduled.
Compared to the backpressure metric based on sojourn time backlog \cite{hai2018delay} and HOL sojourn time \cite{ji2012delay}, \emph{expQ} can increase the bandwidth utilization efficiency, since it does not require tracking the sojourn time of individual packets. 

\section{THROUGHPUT OPTIMALITY}\label{sec:proof}

The throughput optimality of a routing scheme states that all queues in the network are strongly stable as long as the arrival rates of flows are within the network capacity region.
The condition of strong stability is defined as\cite{neely2005dynamic,georgiadis2006resource,neely2022stochastic,jiao2015virtual} 
\begin{equation}\label{E:stability}
    \limsup_{t\rightarrow \infty}\frac{1}{t}\sum_{\tau=0}^{t-1}\mathbb{E}\left[Q_i^{(c)}(\tau)\right]<\infty\;,\forall\; i,c\in\ccalV\;.
\end{equation}
Next, we first prove the throughput optimality of queue length-based SP-BP through Lyapunov drift theory~\cite{neely2005dynamic,georgiadis2006resource,neely2022stochastic,jiao2015virtual}, then the conditional throughput optimality for SP-BP with \emph{expQ}.

\subsection{QUEUE LENGTH-BASED SP-BP SCHEME}\label{sec:proof:spbp-q}
Our proof is similar to~\cite{neely2005dynamic,georgiadis2006resource,neely2022stochastic,jiao2015virtual}, except for \eqref{E:bq}-\eqref{E:w}.

\begin{theorem}\label{t:spbp-q}
A SP-BP scheme with backlog metric, $U_i^{(c)}(t)=Q_i^{(c)}(t)+B_i^{(c)}$ \forall $i,c\in\ccalV$, where $0\leq B_i^{(c)}<\infty$ is a non-negative queue-agnostic constant bias, meets the condition of strong stability in \eqref{E:stability} within the network capacity region.
\end{theorem}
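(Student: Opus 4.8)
The plan is to run the classical Lyapunov-drift proof of throughput optimality for backpressure~\cite{neely2005dynamic,georgiadis2006resource,neely2022stochastic,jiao2015virtual}, the \emph{only} new ingredient being that the bias $B_i^{(c)}$ enters the scheduling objective \eqref{E:commodity}--\eqref{E:scheduling} but not the queue dynamics \eqref{E:queue}; I will treat it as a bounded, state-independent perturbation of the differential backlog. First I would fix the quadratic Lyapunov function $L\big(\bbQ(t)\big)=\sum_{i,c\in\ccalV}\big(Q_i^{(c)}(t)\big)^2$ and its one-slot conditional drift $\Delta(t)=\E{L(\bbQ(t+1))-L(\bbQ(t))\mid\bbQ(t)}$, and invoke the standing assumptions that the external arrivals $A_i^{(c)}(t)$ (with mean $\lambda_i^{(c)}$) and the instantaneous link rates $\bbR_{e,t}\le R_{\max}$ are bounded and independent of the queue state.

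Next I would bound the drift. Squaring \eqref{E:queue}, using $(\max[q-b,0]+a)^2\le q^2+(a^2+b^2)+2q(a-b)$, summing over $i,c$, taking conditional expectations, and applying the standard rearrangement that converts $\sum_{i,c}Q_i^{(c)}\big(M_{i+}^{(c)}-M_{i-}^{(c)}\big)$ into minus the backpressure weight of the activated schedule, I obtain
\begin{equation}\label{Ep:drift}
\Delta(t)\le C_0+2\sum_{i,c\in\ccalV}Q_i^{(c)}(t)\,\lambda_i^{(c)}-2\,\E{\sum_{c\in\ccalV}\sum_{(\overrightarrow{i,j})}\mu_{ij}^{(c)}(t)\,Q_{ij}^{(c)}(t)\ \Big|\ \bbQ(t)},
\end{equation}
with $C_0<\infty$ depending only on $|\ccalV|$, $R_{\max}$, and the arrival-moment bounds. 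This is the same starting point as in the unbiased case; the argument diverges from here.

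The crux is to replace the SP-BP weight $\sum_{c}\sum_{(\overrightarrow{i,j})}\mu_{ij}^{(c)}(t)Q_{ij}^{(c)}(t)$ in \eqref{Ep:drift} by the weight a throughput-optimal policy could attain, up to a constant. Writing $U_{ij}^{(c)}(t)=Q_{ij}^{(c)}(t)+\big(B_i^{(c)}-B_j^{(c)}\big)$, noting that $\ccalV$ is finite so $\beta:=\max_{i,c}B_i^{(c)}<\infty$ hence $|B_i^{(c)}-B_j^{(c)}|\le\beta$, and noting that at most one commodity is served per link at bounded rate so $\sum_{c}\sum_{(\overrightarrow{i,j})}\mu_{ij}^{(c)}(t)\le |\ccalE|R_{\max}$, I conclude that for \emph{any} feasible routing/scheduling decision the biased weight and the pure differential-backlog weight differ by at most the constant $C_1:=\beta|\ccalE|R_{\max}$. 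Since the SP-BP decisions \eqref{E:commodity}--\eqref{E:scheduling} maximize the biased weight over all feasible decisions, a two-sided use of this bound (once for the SP-BP schedule, once for the competitor) gives, for every feasible alternative $\{\tilde\mu_{ij}^{(c)}(t)\}$,
\begin{equation}\label{Ep:opt}
\sum_{c}\sum_{(\overrightarrow{i,j})}\mu_{ij}^{(c)}(t)\,Q_{ij}^{(c)}(t)\ \ge\ \sum_{c}\sum_{(\overrightarrow{i,j})}\tilde\mu_{ij}^{(c)}(t)\,Q_{ij}^{(c)}(t)\ -\ 2C_1 .
\end{equation}
I expect this step --- arguing that the bias perturbs the MaxWeight objective by only a \emph{state-independent} constant --- to be the main obstacle, since it is exactly where the boundedness of the biases and the one-commodity-per-link structure must be used carefully; everything else is textbook.

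Finally I would close with the capacity-region argument: because $\bblambda$ lies strictly inside the capacity region, there exist $\epsilon>0$ and a stationary randomized policy (depending only on the channel state) whose decisions $\{\tilde\mu_{ij}^{(c)}(t)\}$ satisfy $\E{\sum_{c}\sum_{(\overrightarrow{i,j})}\tilde\mu_{ij}^{(c)}(t)Q_{ij}^{(c)}(t)\mid\bbQ(t)}\ge\sum_{i,c}(\lambda_i^{(c)}+\epsilon)Q_i^{(c)}(t)$~\cite{georgiadis2006resource,neely2022stochastic}. Substituting this together with \eqref{Ep:opt} into \eqref{Ep:drift} yields $\Delta(t)\le (C_0+2C_1)-2\epsilon\sum_{i,c}Q_i^{(c)}(t)$. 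Taking expectations, telescoping over $\tau=0,\dots,t-1$, using $L\ge0$, and dividing by $t$ gives $\frac1t\sum_{\tau=0}^{t-1}\sum_{i,c}\E{Q_i^{(c)}(\tau)}\le (C_0+2C_1)/(2\epsilon)+\E{L(\bbQ(0))}/(2\epsilon t)$; letting $t\to\infty$ makes the right-hand side finite, which is precisely the strong-stability condition \eqref{E:stability}.
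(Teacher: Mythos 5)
Your proposal is correct, but it follows a genuinely different route from the paper's proof. The paper builds the Lyapunov function directly on the \emph{biased} metrics, $L(t)=\tfrac12\sum_{i,c}U_i^{(c)}(t)^2$ with $U_i^{(c)}=Q_i^{(c)}+B_i^{(c)}$, derives the update rule $U_i^{(c)}(t+1)=\max\big[U_i^{(c)}(t)-M_{i-}^{(c)}(t),\,B_i^{(c)}\big]+M_{i+}^{(c)}(t)+A_i^{(c)}(t)$, and squares it using a tailored inequality $(\max[q-o,b]+p)^2\leq q^2+o^2+p^2+b^2+2bp+2q(p-o)$ (proved in its Appendix); the bias is then absorbed into the drift constant $W$, SP-BP is shown to \emph{exactly} minimize the resulting drift bound, and strong stability of $Q_i^{(c)}$ follows at the end by subtracting the finite sum $\sum_{i,c}B_i^{(c)}$ from the time-average bound on $U_i^{(c)}$. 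You instead keep the classical Lyapunov function in the raw queue lengths and treat the bias as a state-independent additive perturbation of the MaxWeight objective: since each undirected link serves one commodity at rate at most $R_{\max}$ and $|B_i^{(c)}-B_j^{(c)}|\leq\max_{i,c}B_i^{(c)}$, the biased and unbiased link weights of any feasible decision differ by at most $C_1$, so the two-sided argument in your inequality shows SP-BP is a $2C_1$-additive approximation of unbiased MaxWeight, after which the standard stationary-randomized-policy comparison closes the proof. Both arguments are sound and both yield additive constants growing with the biases. Your version is more modular --- it reuses the textbook drift lemma and makes the degradation constant $C_1=\beta|\ccalE|R_{\max}$ explicit, and it would cover any scheduler that maximizes the biased weight only approximately --- whereas the paper's version has no two-sided slack (the policy minimizes its own drift bound exactly) and, more importantly for the paper, its structure on $U_i^{(c)}$ is what gets reused verbatim in the proof of Theorem~2 for the \emph{expQ} metric, where the compounding term is folded into the arrivals; your decomposition would not transfer to that argument as directly. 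One small caveat: your step bounding the per-slot service by $|\ccalE|R_{\max}$ and your drift constant $C_0$ implicitly require bounded (second moments of) link rates and arrivals, which the paper also assumes implicitly in its constant $W$, so this is a matter of stating the assumption, not a gap.
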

\vspace{-0.1in}
\begin{proof} 
Define $\bbQ(t)=\left(\ccalQ_i^{(c)}(t)\right)_{i,c\in\ccalV}$ as the current queueing state of the network. Define the Lyapunov function as
\begin{equation}\label{E:Lyapunov}
    L(t):=\frac{1}{2}\sum_{i\in\ccalV}\sum_{c\in\ccalV}U_i^{(c)}(t)^2\;,
\end{equation}
and the Lyapunov drift as
\begin{equation}
    \Delta(t):=\mathbb{E}\left[ L(t+1) - L(t) | \bbQ(t) \right]\;.
\end{equation}
The evolution of metric $U_i^{(c)}(t)=Q_i^{(c)}(t)+B_i^{(c)}$ follows 
\begin{equation}\label{E:bq}
    U_i^{(c)}\!(t\!+\!1\!) \!=\! \max\!\left[\!U_i^{(c)}\!(t) \!-\! M_{i-}^{(c)}\!(t), B_i^{(c)}\!\right] \! + \! M_{i+}^{(c)}\!(t) \!+\! A_i^{(c)}\!(t).
\end{equation}
By squaring the queue update in \eqref{E:bq}, and using the fact that $(\max[q-o, b]+p)^2\leq q^2+o^2+p^2+b^2+2bp+2q(p-o)$ for $q\geq0, o\geq0, b\geq 0, p\geq 0$ (see Appendix~\ref{app:inequality}), we have:
\begin{equation}\label{E:drift}
    \Delta(t) \!\leq\! W \!\!+\!\!\! \sum_{i,c\in\ccalV}\!\!U_i^{(c)}\!(t)\mathbb{E}\!\left[\!\lambda_i^{(c)} \!+\! M_{i+}^{(c)}\!(t) \!-\! M_{i-}^{(c)}\!(t) \Big| \bbQ\!(t)\!\right]\!,
\end{equation}
where $\mathbb{E}\!\left[ A_{i}^{(c)}\!(t) \Big| \bbQ\!(t)\right]\!=\!\lambda_i^{(c)}$ 
and $W$ is a finite constant that depends on the maximum possible first and second moments of packet arrival rates and transmission rates:
\begin{equation}\label{E:w}
\begin{aligned}
W \geq & \frac{1}{2}\!\sum_{i,c\in\ccalV}\!\mathbb{E}\bigg\{ 
     \left[M_{i-}^{(c)}(t)\right]^2 + \left[M_{i+}^{(c)}\!(t) \!+\! A_i^{(c)}(t)\right]^2  \\
    &  + \left[B_{i}^{(c)}\right]^2 + 2B_{i}^{(c)}\!\left[M_{i+}^{(c)}\!(t) \!+\! \lambda_i^{(c)}\right] \bigg\}\;.
\end{aligned}
\end{equation}
Next, we show that the biased BP described in Section~\ref{sec:bp} is designed to choose a feasible policy $\bbM(t)=\big(\mu_{ij}^{(c)}(t)\big)_{i,j,c\in\ccalV}$  under current network state to minimize the bound of the Lyapunov drift, i.e., the right-hand side of \eqref{E:drift}. 
Since $W$ and $\lambda_{i}^{(c)}$ are constants, this amounts to minimizing the following:
\begin{equation*}
\mathbb{E}\bigg\{\sum_{i,c\in\ccalV}\!U_i^{(c)}\!(t)\left[ M_{i+}^{(c)}(t) \!-\! M_{i-}^{(c)}(t) \right] \bigg| \bbQ(t)\bigg\}\;,
\end{equation*}
where the finite sums in the second term in \eqref{E:drift} are pushed through the expectations. 
By the principle of opportunistically minimizing an expectation, we can minimize the above expectation by minimizing the function inside of it given the observed $\bbQ(t)$ and network topology (network state): 
\begin{equation}\label{E:inner:min}
   \bbM^*\!(t) \! = \!\argmin_{\bbM(t)\in\bbPi} \sum_{i,c\in\ccalV}\!\!U_i^{(c)}\!(t)\!\Bigg[\sum_{j\in\ccalV}\mu_{ji}^{(c)}(t) \!-\! \sum_{j\in\ccalV}\mu_{ij}^{(c)}(t) \!\Bigg]\!,
\end{equation}
where $\bbPi$ is the feasible policy space of routing and scheduling decisions.
By rearranging the sums in \eqref{E:inner:min}, we have   
\begin{equation}\label{E:inner:max}
  \bbM^*(t)  = \argmax_{\bbM(t)\in\bbPi} \sum_{i\in\ccalV}\sum_{j\in\ccalV}\sum_{c\in\ccalV}\!\mu_{ij}^{(c)}\left[ U_{i}^{(c)}(t) \!-\! U_{j}^{(c)}(t) \right].
\end{equation}
Notice that \eqref{E:inner:max} is equivalent to SP-BP in \eqref{E:commodity}-\eqref{E:quota}.
Since the bound on the Lyapunov drift in~\eqref{E:drift} is minimized, the arrival rates  within the network capacity region $\big(\lambda_{i}^{(c)} \big)_{i,c\in\ccalV} \in \bbLambda$ can be expressed as that there is a constant $\beta>0$ such that   
$$
\mathbb{E}\left[\lambda_i^{(c)} \!+\! M_{i+}^{(c)}(t) \!-\! M_{i-}^{(c)}(t) \Big| \bbQ(t)\right] \leq -\beta ,\;\forall\; i,c\in\ccalV.
$$
Substituting the expectation in \eqref{E:drift} with $-\beta$, we have
\begin{equation}\label{E:bound}
    \Delta(t) \leq W - \beta\sum_{i,c\in\ccalV}\!U_i^{(c)}\!(t)\;.
\end{equation}
The time average of \eqref{E:bound} yields \cite[Lemma 4.1]{georgiadis2006resource}:
\begin{equation}\label{E:stable:bias}
    \limsup_{t\rightarrow \infty}\frac{1}{t}\sum_{\tau=0}^{t-1}\sum_{i,c\in\ccalV}\mathbb{E}\left[Q_i^{(c)}(\tau)+B_i^{(c)}\right]<\frac{W}{\beta} \;.
\end{equation}
Since $W \!<\!\infty$, $0\!\leq\!B_i^{(c)}\!<\!\infty$, the condition of strong stability in \eqref{E:stability} immediately follows by moving $\sum_{i,c\in\ccalV}B_i^{(c)}$ from the left hand side of~\eqref{E:stable:bias} to the right hand side.
\end{proof}

\vspace{-0.15in}
\subsection{SP-BP SCHEME WITH EXPQ}\label{sec:proof:expQ}
\begin{theorem}\label{t:expQ}
There exists a constant $ \epsilon \geq 0 $ such that SP-BP with delay-based backlog metric \emph{expQ} defined in Section~\ref{sec:solution:expQ}, denoted as $ E_i^{(c)}\!(t)=g\!\left(\!\ccalQ_{i}^{(c)}(t)\!\right) $, meets the condition of strong stability in~\eqref{E:stability} within a reduced network capacity region. 
\end{theorem}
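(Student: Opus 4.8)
The plan is to replicate the Lyapunov-drift argument of Theorem~\ref{t:spbp-q}, this time with backlog metric $U_i^{(c)}(t) = E_i^{(c)}(t) + B_i^{(c)}$, where $E_i^{(c)}(t) = g(\ccalQ_i^{(c)}(t))$ is the \emph{expQ} value, Lyapunov function $L(t) = \tfrac12\sum_{i,c\in\ccalV}U_i^{(c)}(t)^2$, and drift $\Delta(t) = \mathbb{E}[L(t+1) - L(t)\mid\bbQ(t)]$. The essential new ingredient is a one-step recursion for \emph{expQ} that plays the role of \eqref{E:bq}. Put $\rho_i^{(c)}(t) = \max[\,1 - M_{i-}^{(c)}(t)/Q_i^{(c)}(t),\,0\,]\in[0,1]$. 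Since a node cannot transmit more than it holds, $M_{i-}^{(c)}(t)\le Q_i^{(c)}(t)$, and since $g(\ccalQ_i^{(c)}(t))\ge Q_i^{(c)}(t)$, the definition \eqref{E:expQ} gives $E_i^{(c)}(t)\,\rho_i^{(c)}(t) = \frac{E_i^{(c)}(t)}{Q_i^{(c)}(t)}\big(Q_i^{(c)}(t) - M_{i-}^{(c)}(t)\big)\le E_i^{(c)}(t) - M_{i-}^{(c)}(t)$, hence
\begin{equation*}
E_i^{(c)}(t+1) \le (1+\epsilon)\big[E_i^{(c)}(t) - M_{i-}^{(c)}(t)\big] + M_{i+}^{(c)}(t) + A_i^{(c)}(t),
\end{equation*}
equivalently $0\le U_i^{(c)}(t+1) \le (1+\epsilon)U_i^{(c)}(t) + \Phi_i^{(c)}(t)$ with $\Phi_i^{(c)}(t) := M_{i+}^{(c)}(t) + A_i^{(c)}(t) - (1+\epsilon)M_{i-}^{(c)}(t) - \epsilon B_i^{(c)}$.

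Squaring this bound (legitimate, since its right-hand side is nonnegative), summing over $i,c$, and conditioning on $\bbQ(t)$ yields
\begin{equation*}
\Delta(t) \le \widehat{W} + (2\epsilon+\epsilon^2)\sum_{i,c\in\ccalV} E_i^{(c)}(t)^2 + 2(1+\epsilon)\sum_{i,c\in\ccalV} U_i^{(c)}(t)\,\mathbb{E}\!\left[\Phi_i^{(c)}(t)\mid\bbQ(t)\right],
\end{equation*}
where $\widehat{W}<\infty$ collects the bounded first- and second-moment terms, handled as $W$ in \eqref{E:w}. The cross term is treated as in \eqref{E:inner:min}--\eqref{E:inner:max}: rearranging flow balance gives $\sum_{i,c}U_i^{(c)}(t)\big[M_{i+}^{(c)}(t)-(1+\epsilon)M_{i-}^{(c)}(t)\big] = -\sum_{i,j,c}\mu_{ij}^{(c)}(t)\big[U_i^{(c)}(t)-U_j^{(c)}(t)\big] - \epsilon\sum_{i,j,c}\mu_{ij}^{(c)}(t)U_i^{(c)}(t)$; since \emph{expQ}-BP maximizes $\sum_{i,j,c}\mu_{ij}^{(c)}(t)[U_i^{(c)}(t)-U_j^{(c)}(t)]$, discarding the nonpositive $-\epsilon\sum\mu_{ij}^{(c)}U_i^{(c)}$ term and comparing against a stationary randomized policy supporting the arrival rates with slack $\beta>0$ gives, exactly as in Theorem~\ref{t:spbp-q}, $\sum_{i,c}U_i^{(c)}(t)\,\mathbb{E}[\Phi_i^{(c)}(t)\mid\bbQ(t)] \le -\beta\sum_{i,c}U_i^{(c)}(t)$.

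The crux — and the only place the capacity reduction enters — is the residual term $(2\epsilon+\epsilon^2)\sum_{i,c}E_i^{(c)}(t)^2$, which has no analogue in Theorem~\ref{t:spbp-q} and which a purely linear negative drift cannot absorb when $\epsilon>0$. To control it I would quantify how far \emph{expQ} inflates the true queue: with $D_i^{(c)}(t) := E_i^{(c)}(t)-Q_i^{(c)}(t)\ge 0$, \eqref{E:expQ} and \eqref{E:queue} give $D_i^{(c)}(t+1) = \rho_i^{(c)}(t)\big[\epsilon Q_i^{(c)}(t) + (1+\epsilon)D_i^{(c)}(t)\big]$, so the inflation collapses toward zero whenever a queue is appreciably served, and the self-regulating behaviour of MaxWeight (queues with large $E_i^{(c)}(t)$, hence large backpressure, get scheduled) should force — within a sufficiently lightly loaded set of arrival rates and for $\epsilon$ below an explicit threshold — a bound of the form $(2\epsilon+\epsilon^2)\sum_{i,c}E_i^{(c)}(t)^2 \le W' + \beta'\sum_{i,c}U_i^{(c)}(t)$ with $\beta'<(1+\epsilon)\beta$; the restriction on $\epsilon$ and the correspondingly shrunk admissible set of $(\lambda_i^{(c)})$ are precisely the ``reduced capacity region'' (for $\epsilon=0$ everything collapses to Theorem~\ref{t:spbp-q} over the full region, which already certifies the existence claim). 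I expect \emph{making this last step rigorous} — extracting the $\epsilon$-threshold from the MaxWeight dynamics rather than from an ad hoc boundedness assumption, and pinning down the reduced region — to be the main obstacle. Granting it, the drift becomes $\Delta(t) \le W'' - \beta''\sum_{i,c}U_i^{(c)}(t)$ for some $\beta''>0$, and the time-average argument of \cite[Lemma~4.1]{georgiadis2006resource} gives $\limsup_{t\to\infty}\tfrac1t\sum_{\tau=0}^{t-1}\sum_{i,c}\mathbb{E}[E_i^{(c)}(\tau)]<\infty$; since $Q_i^{(c)}(t)\le E_i^{(c)}(t)$, the strong-stability condition \eqref{E:stability} follows.
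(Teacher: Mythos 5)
Your derivation is sound up to the drift bound, but it stalls exactly where you say it does, and that unproven step is the heart of the matter: because you keep the multiplicative factor $(1+\epsilon)$ in the recursion $U_i^{(c)}(t+1)\le(1+\epsilon)U_i^{(c)}(t)+\Phi_i^{(c)}(t)$, squaring leaves a \emph{positive quadratic} residual $(2\epsilon+\epsilon^2)\sum_{i,c}U_i^{(c)}(t)^2$ in the drift, and a linear negative drift $-\beta\sum_{i,c}U_i^{(c)}(t)$ can never dominate a quadratic positive term for large backlogs. Your proposed fix — a bound of the form $(2\epsilon+\epsilon^2)\sum_{i,c}E_i^{(c)}(t)^2\le W'+\beta'\sum_{i,c}U_i^{(c)}(t)$ extracted from ``MaxWeight self-regulation'' — is equivalent to first establishing a uniform (or second-moment) bound on $E_i^{(c)}(t)$, i.e., it presupposes essentially the stability you are trying to prove; the recursion $D_i^{(c)}(t+1)=\rho_i^{(c)}(t)\bigl[\epsilon Q_i^{(c)}(t)+(1+\epsilon)D_i^{(c)}(t)\bigr]$ does not by itself deliver it, since nothing prevents a queue from going unserved for long stretches under MaxWeight with many commodities. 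So for any $\epsilon>0$ the argument as written does not close (the $\epsilon=0$ remark is technically true but vacuous with respect to the theorem's intent).

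The paper avoids the quadratic term by a different device: it bounds the compounding increment itself, $\epsilon E_i^{(c)}(t)\max\bigl[1-M_{i-}^{(c)}(t)/Q_i^{(c)}(t),\,0\bigr]\le\epsilon E_i^{(c)}(t)\le b$, and treats $b$ as a constant \emph{virtual arrival}. Then the expQ metric obeys
\begin{equation*}
\tilde{U}_i^{(c)}(t+1)\;\le\;\max\!\left[\tilde{U}_i^{(c)}(t)-M_{i-}^{(c)}(t),\,B_i^{(c)}\right]+M_{i+}^{(c)}(t)+A_i^{(c)}(t)+b,
\end{equation*}
which has exactly the additive form \eqref{E:bq} with $A_i^{(c)}(t)$ replaced by $A_i^{(c)}(t)+b$, so the whole Theorem~\ref{t:spbp-q} machinery (including the Appendix~\ref{app:inequality} inequality) applies verbatim with no leftover quadratic term; the capacity reduction then appears cleanly as the requirement $\bigl(\lambda_i^{(c)}+b\bigr)_{i,c\in\ccalV}\in\bbLambda$, and the admissible range of $\epsilon$ is recovered a posteriori from the resulting uniform bound $E_i^{(c)}(t)<\gamma$ via $0\le\epsilon\le b/\gamma$ as in \eqref{Ea:eps}. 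Note this is itself a boundedness postulate closed by self-consistency — precisely the kind of assumption you hoped to avoid — but it is the step that lets the proof reduce to Theorem~\ref{t:spbp-q}; if you want to proceed along your multiplicative route instead, you must first prove the moment bound on $E_i^{(c)}(t)$ independently, which you have not done.
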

\begin{proof}
We denote the backlog metric of SP-BP with \emph{expQ} as $ \tilde{U}_i^{(c)}(t)=E_i^{(c)}(t)+B_i^{(c)} $, and consider the compounding component of $E_i^{(c)}(t)$ in \eqref{E:expQ} as virtual packet arrivals, which is assumed to be bounded for a finite $t<\infty$ and some $\epsilon\geq 0$:
$$ \epsilon E_{i}^{(c)}(t)\!\max\!\left[\!1\!-\!\frac{M_{i-}^{(c)}\!(t)}{Q_{i}^{(c)}(t)}, 0\right] \leq \epsilon E_{i}^{(c)}(t) \leq b\;. $$
This, along with \eqref{E:expQ} and $ E_{i}^{(c)}(t) \geq Q_{i}^{(c)}(t)$, lead to
\begin{equation}\label{Ea:bp}
    \tilde{U}_i^{(c)}\!(t\!+\!1\!) \!\leq\! \max\!\left[\!\tilde{U}_i^{(c)}\!(t) \!-\! M_{i-}^{(c)}\!(t), B_i^{(c)}\!\right] \! + \! M_{i+}^{(c)}\!(t) \!+\! A_i^{(c)}\!(t) \!+\! b.
\end{equation}
We apply the proof for Theorem~\ref{t:spbp-q} by substituting \eqref{E:bq} with~\eqref{Ea:bp}, and replacing $ U_i^{(c)}(t) $ by $\tilde{U}_i^{(c)}(t)$ and $ A_{i}^{(c)}\!(t) $ by $ A_{i}^{(c)}\!(t) + b $, thus obtain the policy of SP-BP with \emph{expQ} as
\begin{equation}\label{E:expQ:max}
  \bbM^*(t)  = \argmax_{\bbM(t)\in\bbPi} \sum_{i\in\ccalV}\sum_{j\in\ccalV}\sum_{c\in\ccalV}\!\mu_{ij}^{(c)}\left[ \tilde{U}_{i}^{(c)}(t) \!-\! \tilde{U}_{j}^{(c)}(t) \right],
\end{equation}
and that within a reduced capacity region $ \big(\lambda_{i}^{(c)} + b\big)_{i,c\in\ccalV} \in \bbLambda $, there is a constant $\beta>0$ such that \forall $ i,c\in\ccalV $ 
$$
\mathbb{E}\left[\lambda_i^{(c)} \! + b \!+\! M_{i+}^{(c)}\!(t) \!-\! M_{i-}^{(c)}\!(t) \Big| \bbQ(t)\right] \leq -\beta \; .
$$
Subsequently, \emph{expQ} meets the condition of strong stability  
\begin{equation}\label{E:stable:expQ}
    \limsup_{t\rightarrow \infty}\frac{1}{t}\sum_{\tau=0}^{t-1}\sum_{i,c\in\ccalV}\mathbb{E}\left[E_i^{(c)}(\tau)\right]<\frac{W}{\beta} - \sum_{i,c\in\ccalV}B_i^{(c)}\;.
\end{equation}
With $E_i^{(c)}\!(0)\!<\!\infty$ and $|E_i^{(c)}\!(t\!+\!1)\!-\!E_i^{(c)}\!(t)|\!<\!\infty$, \eqref{E:stable:expQ} leads to uniform boundness $E_i^{(c)}\!(t) \!<\! \gamma \!<\! \infty$. The condition for the above process to be true for $t\rightarrow \infty$ is  
\begin{equation}\label{Ea:eps}
    0\leq\epsilon \leq \frac{b}{\gamma}\;,\; \text{where}\; \left(\!\lambda_{i}^{(c)} \!+\! b \!\right)_{i,c\in\ccalV}\! \in \! \bbLambda,\; E_i^{(c)}\!(t) \!<\! \gamma\;.
\end{equation}
With \eqref{E:stable:expQ}, \eqref{Ea:eps} and ${E}_{i}^{(c)}\!(t) \geq {Q}_{i}^{(c)}\!(t) $, 
all queues are strongly stable within the network capacity region shrank by $b$. 
\end{proof}

According to \eqref{Ea:eps}, in order to come up with a reasonable $\epsilon$ for \emph{expQ}, we need an allowable reduction of capacity region, $b$, and an estimated upper bound of $E_i^{(c)}(t)$.

Simulations in \cite[Ch~4.9]{georgiadis2006resource} and Section~\ref{sec:results:capacity} show that for BP with queue length-based backlog metrics, the inclusion of non-negative queue-agnostic biases does not shrink the network capacity region of the basic BP.
However, as illustrated in Section~\ref{sec:results:capacity}, the application of delay-based backlog metrics, including \emph{expQ}, will reduce the network capacity of BP.

\section{NUMERICAL EXPERIMENTS}
\label{sec:results}

To illustrate the effectiveness of our proposed enhancements to the SP-BP schemes in improving routing performance, we evaluate a series of low-overhead BP routing algorithms in simulated wireless multi-hop networks.
Each network instance is generated by a 2D point process with a given number of nodes $|\ccalV|\in\{20,30,\dots,110\}$ uniformly distributed in the plane with a constant density of $8/\pi$.
We consider a simplified scenario where all mobile devices are equipped with omnidirectional antennas and transmit at identical power levels, which can be captured by the unit-disk interference model previewed in Section~\ref{sec:sys}.
A link is established between two nodes if they are located within a distance of $1$ unit. 
The average degree of conflict graphs under this test configuration is $34.6$.
For each network size $|\ccalV|$, we generate 100 test instances by drawing 10 instances of random networks, each with 10 realizations of random source-destination pairs and random link rates. 
A test instance contains a number (uniformly chosen between $\lfloor 0.30 |\ccalV|\rfloor$ and $ \lceil 0.50|\ccalV|\rceil$) of random flows between different pairs of sources and destinations.
Each test instance also includes a realization of uniformly distributed long-term link rates, $r_{e} \sim \mathbb{U}(10,42)$, which are the expected values of the real-time link rates, $\bbR_{e,t} \sim  \mathbb{N}(r_{e}, 3)$, truncated to between $r_e\pm 9$.
The link rates are configured to capture fading channels with lognormal shadowing.
A test instance is simulated for a total of $T=1000$ time steps.

Two types of flows are employed: in a \textit{streaming} flow $f$, new packets arrive at the source following a Poisson process with a uniformly random arrival rate $\lambda(f) \sim \mathbb{U}(0.2,1.0)$ for all time slots; 
in a \textit{bursty} flow, packets follow Poisson arrivals with a random rate $\lambda(f) \sim \mathbb{U}({2.0,10.0})$ (lighter traffic) or $\lambda(f) \sim \mathbb{U}({6.6,33.0})$ (heavier traffic) for $t<30$, and no new packet arrivals for $t\geq30$.  
With $T=1000$, \update{a streaming and a heavier bursty flows are expected to inject similar  total numbers of packets into the network, allowing better assessment of the impact of traffic pattern}.
\update{These flow rates are considered lightweight since the expected exogenous packet arrival rate per node of $ 0.24 $ is only $8\%$ of the expected total maximum outflow rate per node of $3$, see Appendix~\ref{app:frate} for details.}
The end-to-end delay of a tested BP scheme is collected by tracking the time each packet arrived at the source node $t_0$ and the destination node (departure time) $t_1$.
The end-to-end delay of an undelivered packet is treated as $T$.
We record the latency of individual flows (source-destination pairs) by averaging the end-to-end delays of their packets.

\begin{table}[!t]
    \centering
    \caption{\update{Hyperparameters for Training}}
    \label{tab:training}
\update{
    \begin{tabular}{l|p{5.5cm}}
        \textbf{Item} & \textbf{Value}  \\ \hline
        GCNN config & $L=5$, $g_{l}=32$ for $ 1\leq l\leq 4, g_0=1, g_L=2$, see  Section~\ref{sec:gnn:gnn} for other details  \\ \hline
        Regularization & L2 regularization with a weight decay of $0.001$ \\ \hline
        Learning rate & 0.0001 \\ \hline
        Simulation steps & $T=200$ per training instance \\ \hline
        Mini-batch size & 1 training instance \\ \hline
        Training graphs & A set of 5000 random networks, with size $|\ccalV|\in\{20,30,40,50,60\} $, 100 instances per $|V|$ \\ \hline
        Interference radius & Uniformly random $\mathbb{U}(0, 0.8)$ \\ \hline
        Link rates & Same as test setting \\ \hline
        Number of flows & Uniformly chosen between $\lfloor 0.15 |\ccalV|\rfloor$ and $ \lceil 0.30|\ccalV|\rceil$ \\ \hline
        Traffic pattern & All streaming flows with $\lambda(f) \sim \mathbb{U}(0.2,1.0)$ \\ 
        \hline
    \end{tabular}
}
\end{table}

\begin{table*}[t!]
	\vspace{-0.2in}
	\renewcommand{\arraystretch}{1.0}
	\caption{Tested low-overhead BP algorithms and their detailed configurations, information requirements, and result summary.
	} 
	\label{tab:list}
    \setlength{\tabcolsep}{4pt}
	\footnotesize
	\begin{tabular}{|l|l|l|l|l|l|l|l|l|l|l|}
        \hline
        SP-BP algorithm  & Link & Bias & Backlog & Local & Global & \multicolumn{5}{|c|}{Means$^{*}$ (ranking) under streaming (s) \& bursty (b) traffic} \\ \cline{7-11}
        Acronym & Distance & Scaling & $U_{i}^{(c)}(t)$ & Info.$^\ddag$ & Info. & latency (s) & delivery (s) & latency (b) & delivery (b) & network cap.$^\dag$(s) \\ \hline
        BP & 0  & -- & $Q$  & -- & -- & 891.1 (15) & 12.9\% (15) & 870.1 (15) & 14.7\% (15) & 32.83 (8) \\ \hline
        BP-HOL & 0 & -- & HOL & $t_0$ & -- & 890.3 (14) & 13.9\% (14) & 830.4 (14) & 23.1\% (14) & 15.86 (15) \\ \hline
        BP-SJB & 0 & -- & SJB & $t_0$ & -- & 874.1 (13) & 16.1\% (13) & 792.8 (13) & 26.9\% (13) & 29.89 (13) \\ \hline
        EDR-$\bar{r}$ & 1 & $\bar{r}$ & $Q$ & -- & $\bar{r}$ & 196.7 (10) & 88.1\% (10) & 245.3 (10) & 92.1\% (11) & 49.45 (4) \\ \hline
        EDR-$\bar{r}$-expQ & 1 & $\bar{r}$ & $Q$ & -- & $\bar{r}$ & 194.3 (9) & 89.2\% (8) & 227.0 (7) & 96.6\% (8) & 30.95 (10) \\ \hline
        EDR-$\bar{r}$-HOL & 1 & $\bar{r}$  & HOL & $t_0$ & $\bar{r}$ & 261.0 (11) & 83.5\% (11) & 284.3 (11) & 93.8\% (9) & 27.55 (14) \\ \hline
        EDR-$\bar{r}$-SJB & 1 & $\bar{r}$  & SJB & $t_0$ & $\bar{r}$ & 804.2 (12) & 24.3\% (12) & 713.5 (12) & 38.0\% (12) & 30.04 (12) \\ \hline
        SP-$1/x$ & $1/x$ & -- & $Q$  & $x$ & -- & 187.8 (7) & 88.9\% (9) & 240.5 (9) & 92.8\% (10) & 49.33 (5) \\ \hline
        SP-$1/x$-expQ & $1/x$ & -- & \emph{expQ}  & $x$ & -- & 191.9 (8) & 89.4\% (7) & 227.2 (8) & 96.6\% (7) & 30.90 (10) \\ \hline
        SP-$1/r$-min & $1/r$ & $\bar{r}r_{\max}$ & $Q$  & $r$ & $\bar{r}$, $r_{\max}$ & 90.0 (2) & 96.3\% (2) & 169.9 (2) & 98.7\% (4) & 54.43 (2) \\ \hline
        SP-$1/r$-min-expQ & $1/r$ & $\bar{r}r_{\max}$ & \emph{expQ}  & $r$ & $\bar{r}$, $r_{\max}$ & 115.6 (5) & 95.0\% (5) & 176.4 (4) & 99.1\% (2) & 37.71 (7) \\ \hline
        SP-$\bar{r}/(xr)$ & $\bar{r}/(xr)$ & -- & $Q$  & $r$, $x$ & $\bar{r}$ & 103.0 (3) & 95.6\% (3) & 191.2 (5) & 97.9\% (6) & 49.38 (3) \\ \hline
        SP-$\bar{r}/(xr)$-expQ & $\bar{r}/(xr)$ & -- & \emph{expQ}  & $r$, $x$ & $\bar{r}$ & 143.6 (6) & 93.0\% (6) & 207.6 (6) & 98.1\% (5) & 31.95 (9) \\ \hline
        SP-$\bar{r}/(xr)$-min & $\bar{r}/(xr)$ & $(xr)_{\max}$ & $Q$  & $r$, $x$ & $\bar{r}$, $(xr)_{\max}$ & 88.7 (\textbf{1}) & 96.4\% (\textbf{1}) & 166.9 (\textbf{1}) & 98.8\% (3) & 55.26 (\textbf{1}) \\ \hline
        SP-$\bar{r}/(xr)$-min-expQ & $\bar{r}/(xr)$ & $(xr)_{\max}$ & \emph{expQ}  & $r$, $x$ & $\bar{r}$, $(xr)_{\max}$ & 111.3 (4) & 95.2\% (4) & 171.4 (3) & 99.2\% (\textbf{1}) & 38.82 ({6}) \\ \hline
	\end{tabular}
 \newline\newline  
 \raggedright
 {$t_0$: the time a packet arrives to the network, $r$: individual long-term link rate, $\bar{r}$: network average long-term link rate, $x$: predicted link duty cycle,  $Q$: queue length, SJB: total sojourn time of all packets in a queue, HOL: sojourn time of head-of-line packet in a queue, \emph{expQ}: backlog func. defined in \eqref{E:expQ}. \newline 
 $\ddag$ All BP algorithms require instantaneous link rates $\bbR_{e,t}$ and conflicting degree $d(e)$ for $e\in\ccalE$ in MaxWeight scheduling. Computing $x_e$ requires $d(e)$. \newline 
 $*$ Latency and delivery rate based on mixed traffic; units: latency (number of time slots), network capacity (number of packets per time slot).\newline 
 $\dag$ Ranked by maximum network throughput on random networks of 100 nodes (Fig.~\ref{fig:regular:load}); when maximal values are close, values for $\lambda\leq3.0$ weigh in.}
 \vspace{-0.05in}
\end{table*}

We train a 5-layer GCNN ($L=5$, $g_{l}=32,l\in\{1,\dots,4\}$) as detailed in Section~\ref{sec:gnn:gnn} according to the RL scheme in Section~\ref{sec:gnn:train}.
The training instances are generated similar to the test instances described earlier, except that the underlying network instances are from a set of 100 random networks with $|\ccalV|\in\{20,30,\dots,60\}$, and the number of flows is uniformly chosen between $\lfloor 0.15 |\ccalV|\rfloor$ and $ \lceil 0.30|\ccalV|\rceil$.
Notice that our test instances cover larger networks and denser traffic than our training configuration,
to illustrate that our GNN-based SP-BP schemes can generalize to problems of larger scales.
\update{The detailed hyperparameters and training setting are listed in Table~\ref{tab:training},
and the model converged after 5 epochs.\footnote{Training takes 5 hours on a workstation with a specification of 32GB memory, 8 cores, and Geforce GTX 1070 GPU. }}

\begin{figure}
\centering
    \includegraphics[width=0.95\linewidth]{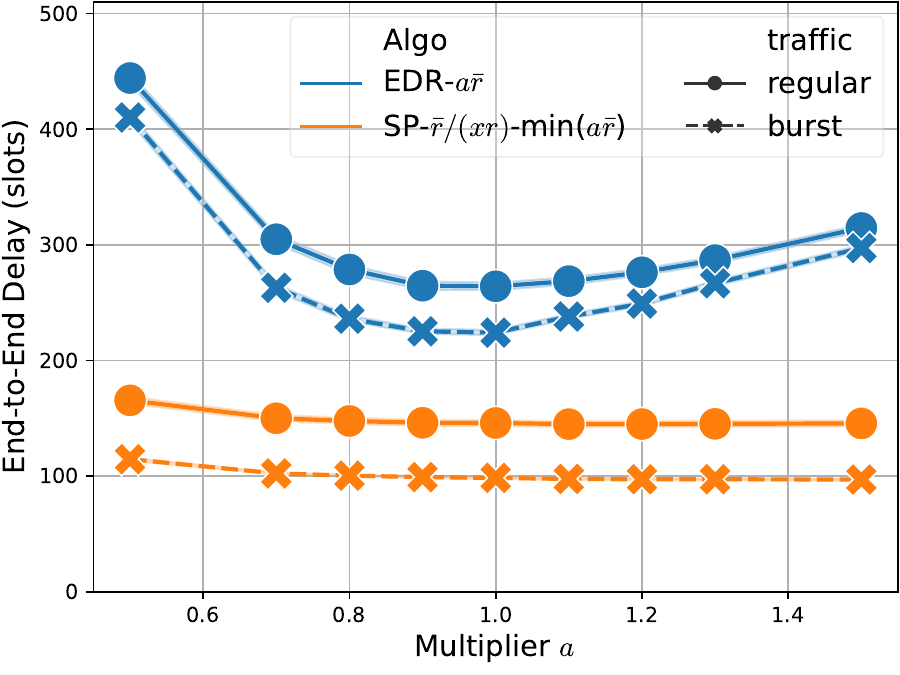}
    \vspace{-0.1in}
    \caption{{\small 
    End-to-end delay of SP-BP routing algorithms as a function of the multiplier of minimum edge weight  $\min(\delta_{e})=a\bar{r}$ on networks of $100$ nodes, simulated with unit-disk interference model, long-term link rates $r_{e} \sim \mathbb{U}(10,42)$, total time steps $T=1000$, and $100$ test instances per point.
    Flow rate $\lambda(f) \sim \mathbb{U}(0.2,1.0)$ for streaming traffic,
    $\lambda(f) \sim \mathbb{U}(2.0,10.0)$ for bursty traffic when $t<30$. 
    }}
    \label{fig:regular:mpx}
    \vspace{-0.1in}
\end{figure}

The tested BP algorithms (acronyms in parentheses identify legends in figures) include 1) unbiased BP: basic BP (BP), delay-based unbiased  BP with SJB \cite{hai2018delay} (BP-SJB) and HOL \cite{ji2012delay} (BP-HOL); 2) queue-length based biased BP: EDR \cite{neely2005dynamic} with edge weight $\delta_{e}=\bar{r}$ (EDR-$\bar{r}$), and GCN-based delay-aware shortest path bias \cite{zhao2023icassp} (SP-$\bar{r}/(xr)$) with edge weight $\delta_{e}=\bar{r}/(x_{e}r_{e})$, and its scaled version (SP-$\bar{r}/(xr)$-min); and
3) delay-based biased BP: combination of BP-SJB and EDR-$\bar{r}$ (EDR-$\bar{r}$-SJB), combination of BP-HOL and EDR-$\bar{r}$ (EDR-$\bar{r}$-HOL), and the two biased BP schemes based on \emph{expQ} (EDR-$\bar{r}$-expQ and SP-$\bar{r}/(xr)$-expQ) where $\epsilon=0.01$ in~\eqref{E:expQ}. 
Postfix -$\min$ in legend refers to a SP-BP scheme with optimally scaled edge weight $\tilde{\delta}_{e}={\delta}_{e}\bar{r}/(\min_{e\in\ccalE}{\delta}_{e})$.
The details of these BP schemes, including their requirements on local and global information and test rankings, are listed in Table~\ref{tab:list}.

\begin{figure*}[t]
	\centering
	\subfloat[]{
		\includegraphics[height=2.5in]{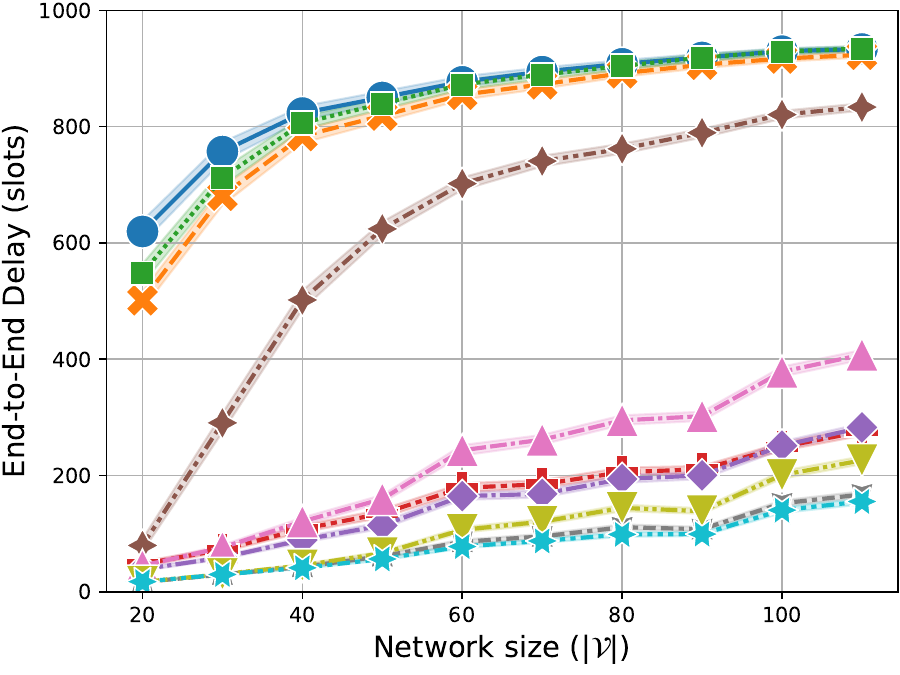}
		\label{fig:regular:delay}\vspace{-0.1in}
	}
	\subfloat[]{
		\includegraphics[height=2.5in]{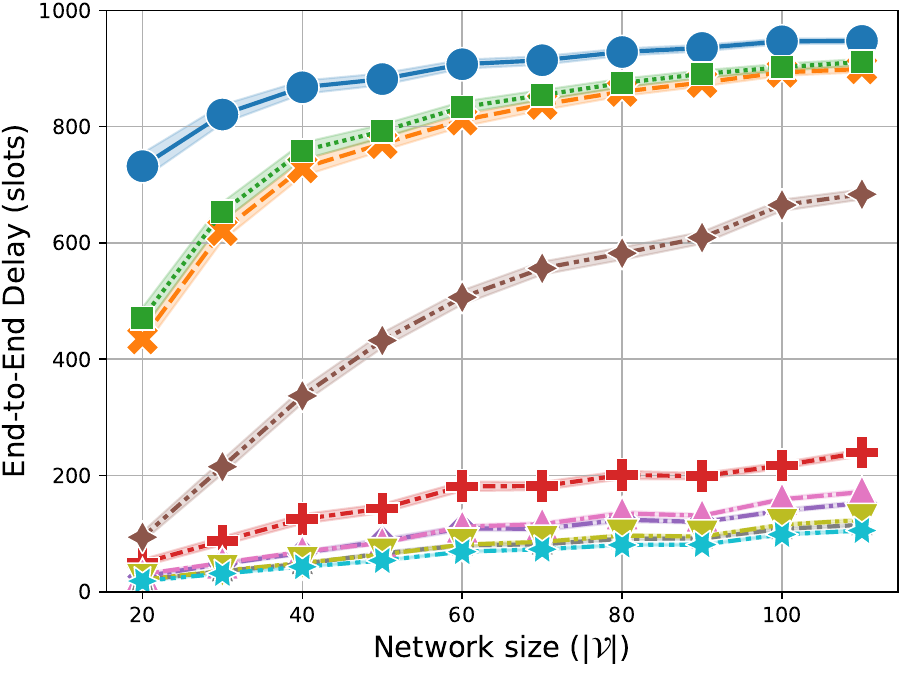}\label{fig:burst:delay:light}\vspace{-0.1in}
	}\\
	\subfloat[]{
		\includegraphics[height=2.5in]{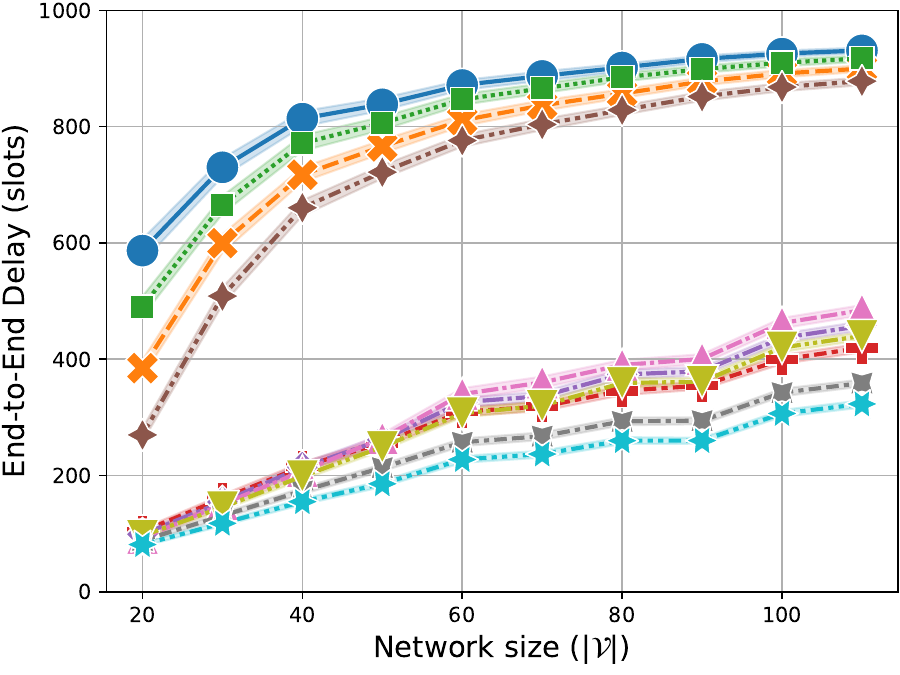}
		\label{fig:burst:delay}\vspace{-0.1in}
	}
	\subfloat[]{
		\includegraphics[height=2.5in]{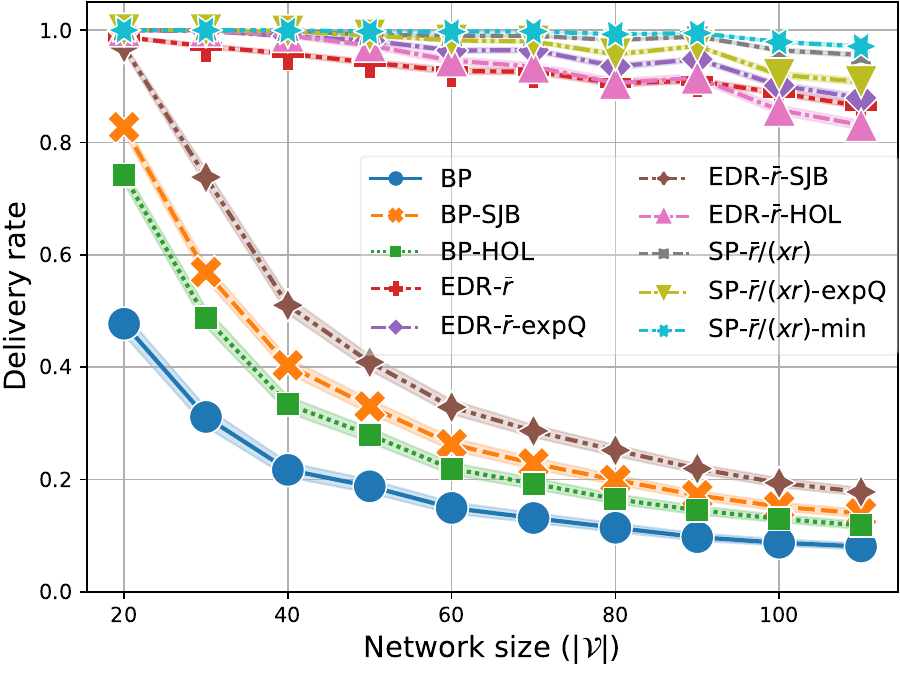}
		\label{fig:burst:delivery}\vspace{-0.1in}
	}
	
	\vspace{-0.1in}
	\caption{{\small Performance of BP algorithms as a function of network size under the unit-disc interference model: 
			(a)~end-to-end delay under streaming traffic, (b)~end-to-end delay under bursty traffic with low flow rate $\lambda(f) \sim \mathbb{U}({2.0,10.0})$ for $t<30$, (c)~end-to-end delay of bursty traffic with high flow rate $\lambda(f) \sim \mathbb{U}({6.6,33.0})$ for $t<30$, and (d)~packet delivery rate of bursty traffic with high flow rate. 
			Simulated with long-term link rates $r_{e} \sim \mathbb{U}(10,42)$, total time steps $T=1000$, and all queues are initialized to be empty. $100$ test instances per point (10 random networks $\times$ 10 realizations of random source-destination pairs and link rates). 
			Error band indicates $95\%$ confidence interval. 
		}
	} 
	\label{fig:main}   
\end{figure*}

\subsection{BIAS SCALING}
\label{sec:results:scaling}

We first examine the optimal bias scaling approach proposed in Section~\ref{sec:solution:scaling}.
We test EDR-$\delta$ and SP-$\bar{r}/(xr)$-min with edge weight adjustment $\underset{e\in\ccalE}{\min}\;\delta_e\vcentcolon=a\bar{r}$, for $a\in\left[0.5,1.5\right]$, on random networks of 100 nodes. 
The results in Fig.~\ref{fig:regular:mpx} show that $a=1.0$ is indeed the optimal setting for EDR-$\delta$ and near-optimal setting for SP-$\bar{r}/(xr)$-min, under both streaming and bursty traffic.
This provides empirical validation of the adjustment rule $\underset{e\in\ccalE}{\min}\;\delta_e\vcentcolon=\bar{r}$ advocated in Section~\ref{sec:solution}.
Notice that the settings tested in Fig.~\ref{fig:regular:mpx} go beyond the simplified setting of CFLP with constant rates treated in Section~\ref{sec:solution}.
However, the proposed scaling is still empirically optimal in these broader scenarios.
Since SP-$\bar{r}/(xr)$-min can tolerate a wide range of $a$, to keep its distributed execution, it can be implemented based on statistical information of edge weights rather than their global minimum.
{The different sensitivities of the end-to-end delay to the scaling of edge weight under the two tested SP-BP schemes, as shown in Fig.~\ref{fig:regular:mpx}, reveal that the shortest path based on link features \cite{zhao2023icassp} not only achieves better performance but is also more robust to the scaling choice.}

\subsection{NETWORK SIZE AND TRAFFIC TYPE}
\label{sec:results:latency}
Next, we evaluate various low-overhead BP algorithms in random networks of different sizes (20-110 nodes), presenting their performances in end-to-end delay and total packet delivery rate after $T=1000$ time steps.
The evaluations are configured with three types of traffic: 1) all streaming flows, 2) all bursty flows, and 3) a mixture of both.

\vspace{-0.2in}
\subsubsection{STREAMING TRAFFIC}\label{sec:results:latency:stream}
When the traffic is configured with all streaming flows, the average end-to-end delays (unit: time slots) as a function of the network size for ten tested BP schemes are presented in Fig.~\ref{fig:regular:delay}.
{Although not presented in Figs.~\ref{fig:main},} the ranking of the BP schemes on packet delivery rate is consistent with the inverted ranking on latency, showing that the latency under streaming traffic is dominated by the contribution of undelivered packets, whose delays are counted as $T$. 
{This is consistent with the results under mixed traffic shown in Table~\ref{tab:list} (Section~\ref{sec:results:latency:mixed}).}

\begin{figure*}[t]
	\centering
	\subfloat[]{
		\includegraphics[height=2.5in]{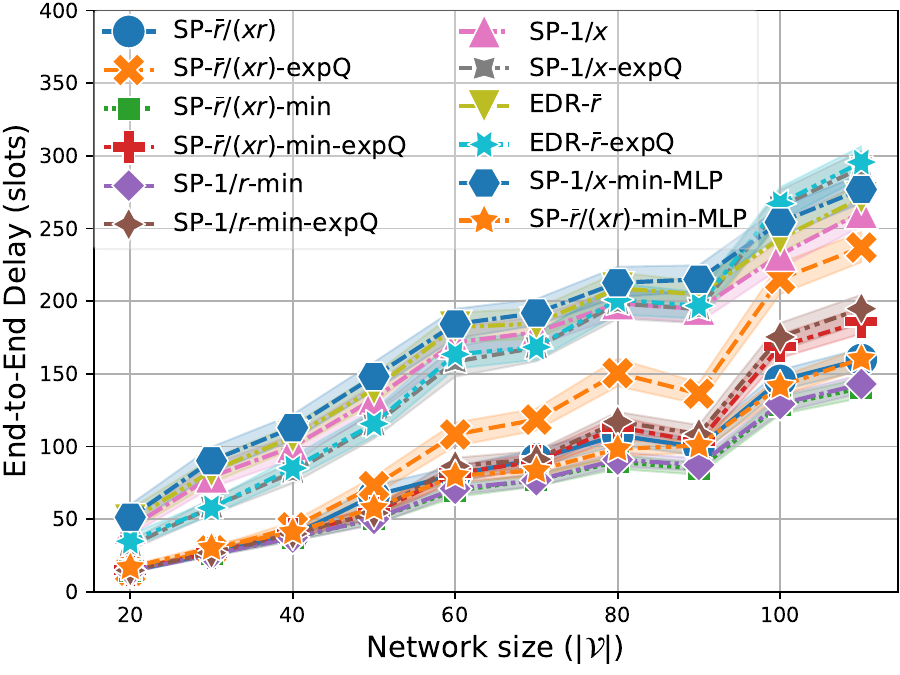}
		\label{fig:mixed:stream:delay}\vspace{-0.1in}
	}
	\subfloat[]{
		\includegraphics[height=2.5in]{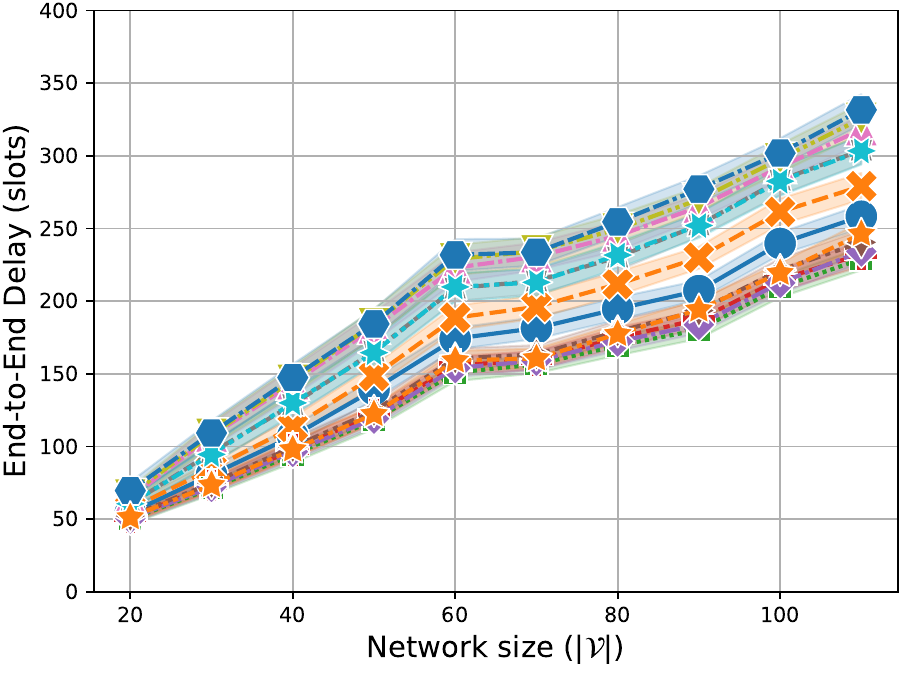}
		\label{fig:mixed:bursty:delay}\vspace{-0.1in}
	} \\
	\subfloat[]{
		\includegraphics[height=2.5in]{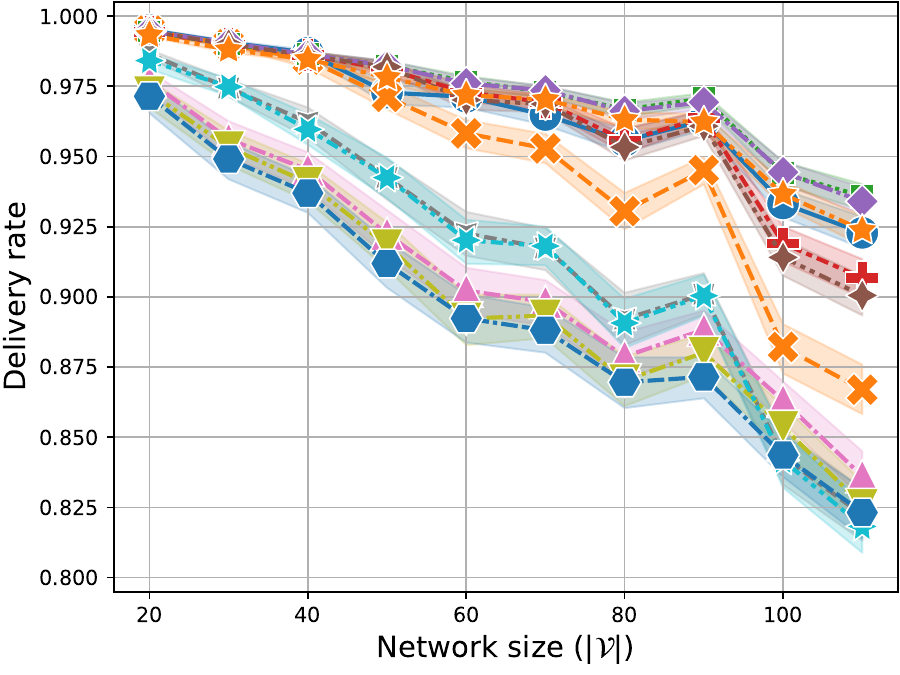}
		\label{fig:mixed:stream:delivery}\vspace{-0.1in}
	}
	\subfloat[]{
		\includegraphics[height=2.5in]{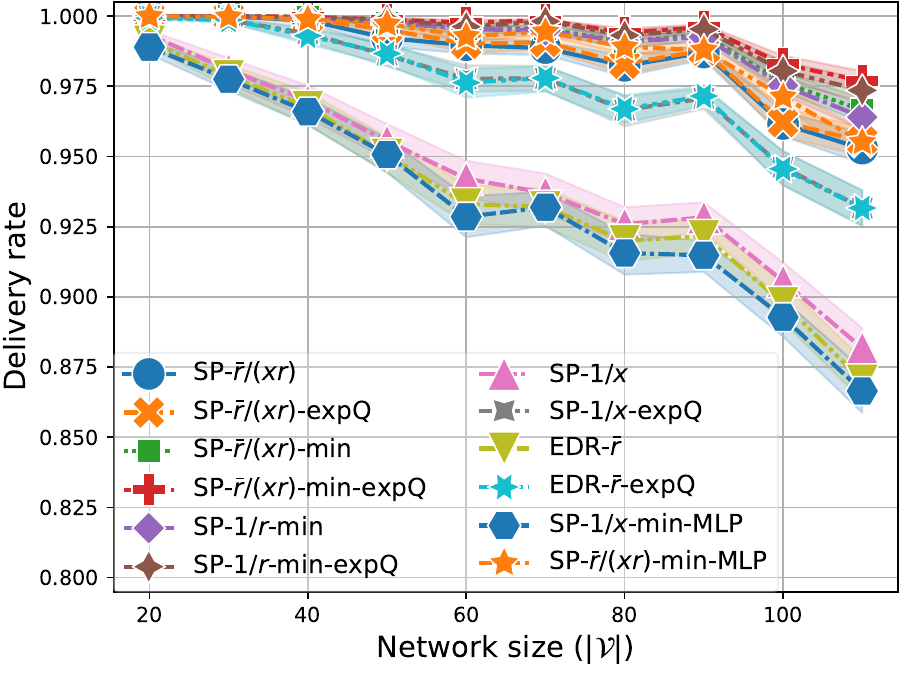}
		\label{fig:mixed:bursty:delivery}\vspace{-0.1in}
	}
	\vspace{-0.1in}
	\caption{{\small Performance of SP-BP schemes as a function of network size with streaming and bursty traffic mixed together at $1:1$ ratio.
			Average end-to-end delays of (a)~streaming flows and (b)~bursty flows. 
			Average packet delivery rates of (c)~streaming flows and (d)~bursty flows.
			Simulated with long-term link rates $r_{e} \sim \mathbb{U}(10,42)$, unit-disc interference model, total time steps $T=1000$, and $100$ test instances per point (10 random networks $\times$ 10 realizations of random flows and link rates).
			Error band indicates $95\%$ confidence interval. 
		}
	} 
	\label{fig:mixed}   
\end{figure*}

While all the BP variations improve the latency over the basic BP, the unbiased BP-SJB and BP-HOL bring limited improvements since they still suffer from the drawbacks of slow startup and random walk.
Their advantages diminish in larger networks, where those drawbacks are more pronounced. 
The biased SP-BP schemes substantially improve the latency, showing that the pre-defined, optimally scaled distance gradients can warm start the system and effectively mitigate random walks.
Queue length-based EDR-$\bar{r}$ achieves a latency lower than 300 time slots, whereas by incorporating additional link features, SP-$\bar{r}/(xr)$ further reduces the latency of EDR-$\bar{r}$ by nearly a half.
Applying optimal bias-scaling to SP-$\bar{r}/(xr)$ (SP-$\bar{r}/(xr)$-min) also slightly improves its latency.

Incorporating delay-based backlog metrics, SJB and HOL, into EDR-$\bar{r}$ can significantly degrade its performance. 
In contrast, \emph{expQ} slightly improves EDR-$\bar{r}$ for smaller networks ($|\ccalV|\leq 90$), and slightly degrades SP-$\bar{r}/(xr)$, 
showing that \emph{expQ} does less harm when applied to streaming traffic, which is not the original purpose of \emph{expQ}. 

\vspace{-0.2in}
\subsubsection{BURSTY TRAFFIC}\label{sec:results:latency:burst}
When the traffic is configured with all bursty flows, the average end-to-end delays (unit: time slots) as a function of the network size for ten tested BP schemes under lighter and heavier traffic loads are respectively presented in Figs.~\ref{fig:burst:delay:light} and~\ref{fig:burst:delay}, and the packet delivery rates corresponding to Fig.~\ref{fig:burst:delay} are illustrated in Fig.~\ref{fig:burst:delivery}.
The flow rates for the lighter and heavier bursty traffic are respectively $\lambda(f) \sim \mathbb{U}(2.0,10.0)$ and $\lambda(f) \sim \mathbb{U}(6.6,33.0)$ for $t<30$, and $\lambda(f)=0$ for $t\geq30$. 
Compared to the cases of streaming traffic, the basic BP performs slightly better under heavier bursty traffic but worse under lighter bursty traffic, 
whereas both SJB and HOL can consistently improve the latency of basic BP regardless of the traffic loads.

Under bursty traffic the shortest path bias is still the most important factor for improved latency and delivery rate, evidenced by the lowest latency achieved by queue length-based SP-$\bar{r}/(xr)$ and SP-$\bar{r}/(xr)$-min under both lighter and heavier traffic loads. 
The fact that they consistently outperform EDR-$\bar{r}$ also demonstrates that link features can better inform shortest path biases than do hop counts.
Optimal bias scaling also benefits bursty traffic, as demonstrated by SP-$\bar{r}/(xr)$-min outperforming SP-$\bar{r}/(xr)$ under heavier traffic loads.

Under lighter bursty traffic, both HOL and \emph{expQ} can significantly improve the latency of EDR-$\bar{r}$ (Fig.~\ref{fig:burst:delay:light}), boosting its delivery rate from $91.0\%$ to $99.8\%$ and $99.5\%$, respectively~\cite{zhao2023enhanced}.
However, as the flow rates triple, they degrade the latency of EDR-$\bar{r}$ (Fig.~\ref{fig:burst:delay}) but still improve its delivery rate (Fig.~\ref{fig:burst:delivery}).
Such an increase in congestion is caused by the fact that HOL and \emph{expQ} can reduce the network capacity of biased BP, as discussed in Section~\ref{sec:results:capacity}.
A similar effect can be observed when \emph{expQ} is applied to SP-$\bar{r}/(xr)$.
Under lighter bursty traffic (Fig.~\ref{fig:burst:delay:light}), SP-$\bar{r}/(xr)$-min, SP-$\bar{r}/(xr)$, and SP-$\bar{r}/(xr)$-expQ achieve the 1st, 2nd, and 3rd best latency, at a $100\%$ delivery rate.
However, under heavier bursty traffic, as shown in Figs.~\ref{fig:burst:delay} and~\ref{fig:burst:delivery}, \emph{expQ} worsens the latency and delivery rate of  SP-$\bar{r}/(xr)$, while optimal bias scaling improves both performances.
For biased BP, \emph{expQ} consistently outperforms HOL and SJB, however, the effectiveness of these delay-based backlog metrics is limited to lighter traffic.

\vspace{-0.2in}
\subsubsection{MIXED TRAFFIC}\label{sec:results:latency:mixed}
We next evaluate low-overhead BP schemes under a mixture of streaming and bursty flows, to assess their effectiveness on the last packet problem under the interference of streaming traffic.
In this case, a source-destination pair is configured as either streaming or bursty flow at equal probabilities.
The bursty flows are configured with the heavier traffic loads described in Section~\ref{sec:results:latency:burst}, to ensure similar numbers of packets are injected into the network for both streaming and bursty flows.
We present the average latency and packet delivery rates of streaming and bursty flows under 15 tested BP schemes, alongside their rankings in Table~\ref{tab:list}.
In general, their rankings are consistent with the results in Sections~\ref{sec:results:latency:stream} and~\ref{sec:results:latency:burst} with similar traffic loads, with the exception that \emph{expQ} is able to improve EDR-$\bar{r}$ and SP-$\bar{r}/(xr)$ for bursty flows in mixed traffic.

\begin{figure}[t!]
\centering
    \includegraphics[width=0.99\linewidth]{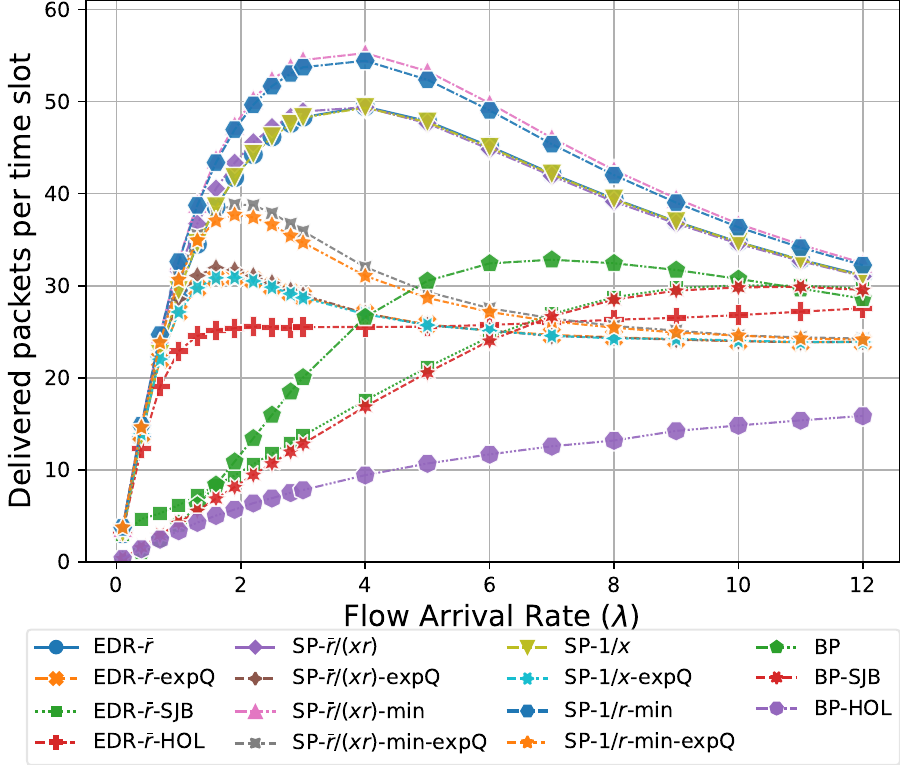}
    \caption{{\small Average end-to-end network throughput (number of packets delivered to their destinations per time slot across the network) under various SP-BP schemes as a function of flow arrival rate. 
    Simulated in random networks of $100$ nodes under unit-disk interference model, long-term link rates $r_{e} \sim \mathbb{U}(10,42)$, total time steps $T=1000$, and $100$ test instances per point (10 random networks $\times$ 10 realizations of random source-destination pairs and link rates). 
    }}
    \label{fig:regular:load}
\end{figure}

We compare the performances of ten SP-BP schemes under mixed traffic, presenting their latency as a function of network size in Figs.~\ref{fig:mixed:stream:delay} and~\ref{fig:mixed:bursty:delay} for  streaming and bursty flows, respectively, and the corresponding packet delivery rates in Figs.~\ref{fig:mixed:stream:delivery} and~\ref{fig:mixed:bursty:delivery}.
Generally, better performance can be achieved by incorporating more local and global information into the shortest path biases.
Notably, SP-$1/x$ can consistently outperform EDR-$\bar{r}$ without access to any global information, whereas the latter requires the global average link rate $\bar{r}$.
When the local long-term link rates $r$ and the global $\bar{r}$ become available, SP-$\bar{r}/(xr)$ can achieve performance comparable to SP-$1/r$-min, which requires additional knowledge of the global $r_{\max}$.
With access to global $(xr)_{\max}$, SP-$\bar{r}/(xr)$-min can achieve the best latency. 
These demonstrate the effectiveness of our proposed method of using GNNs to predict the link duty cycle in biased BP.
The \emph{expQ} consistently improves the delivery rates of biased BP schemes for bursty flows, with the top two delivery rates achieved by SP-$\bar{r}/(xr)$-min-expQ and SP-$1/r$-min-expQ (Fig.~\ref{fig:mixed:bursty:delivery}), demonstrating its effectiveness in solving the last packet problem.

\update{To demonstrate the benefit of adopting GNNs in our enhanced SP-BP schemes, we replace the GCNN with a multi-layer perceptron (MLP) with the same hyperparameters ($L, g_l, \sigma_l$) and training procedure, using node degree as input feature. 
It can be observed from Figs.~\ref{fig:mixed} that the MLP-based schemes, SP-$1/x$-min-MLP and SP-$\bar{r}/(xr)$-min-MLP, always underperform their GCNN-based counterparts due to the inferior ability of MLP in encoding topological information. }

\update{The above results are consistent with similar tests under different node densities and interference levels without retraining the GCNN, as included in our code repository.
It shows that the proposed enhancements generalize well to various scenarios.}

\subsection{NETWORK CAPACITY REGION}
\label{sec:results:capacity}

To illustrate the impact of different low-overhead BP schemes on network capacity, we test them on 10 random networks of size $|\ccalV|=100$ under streaming flows with constant arrival rates ranging from $0.1\leq\lambda\leq12$. 
The total number of packets delivered to their destinations is collected for each test instance after $T=1000$ time steps, and the average end-to-end total throughput of 15 BP schemes as a function of flow arrival rate are presented in Fig.~\ref{fig:regular:load} and the corresponding peak values are listed in Table~\ref{tab:list} (last column). 
For most BP schemes, the end-to-end total throughput first increases with the flow arrival rate, and then decreases slowly after reaching a peak.
When the flow arrival rate is too large, the MaxWeight scheduler will prioritize the links near source nodes, so that fewer packets will reach their destinations, causing a decrease in the end-to-end throughput.

Since the network-wide rate of packet injection $\lambda|\ccalF|$ is linear to the flow rate, the total number of backlogged packets within the network will increase rapidly after the end--to-end throughput reaches its peak. 
Therefore, we consider the peak end-to-end throughput as the network capacity under certain BP schemes. 
The basic BP reaches the network capacity of $32.83$ at $\lambda=7.0$.
With optimally scaled shortest path biases derived from different levels of information access, queue length-based biased BP schemes increase the network capacity of basic BP by $50\%$ to $68\%$ at the same peak of $\lambda=4.0$, while keeping their throughput always above that of basic BP.
Notably, EDR-$\bar{r}$, SP-$1/x$, and SP-$\bar{r}/(xr)$ achieve almost identical maximum throughput, where SP-$\bar{r}/(xr)$ has larger throughput when $\lambda < 4.0$.
The fully optimized SP-$\bar{r}/(xr)$-min and SP-$1/r$-min respectively achieve the best and second network capacities.

The delay-based backlog metrics, including SJB, HOL, and \emph{expQ}, all reduce their network capacity region at different levels, paying a cost for improving delivery rates at light-weight traffic loads.
Our \emph{expQ} keeps achieving higher network capacity than SJB and HOL, when applied to EDR-$\bar{r}$.

\subsection{NODE MOBILITY}
\label{sec:results:mobility}

\begin{table}[!b]
	\renewcommand{\arraystretch}{1.1}
	\caption{End-to-end delay on random networks of $100$ nodes, $T=1000$ under the unit-disk interference model and node mobility. 
	} 
	\label{tab:mobility}
	\centering
    \scriptsize
	\begin{tabular}{|l|l|l|l|l|}
        \hline
	    \multirow{2}{9mm}{Bias update} & \multicolumn{2}{|c|}{EDR-$\bar{r}$} & \multicolumn{2}{|c|}{SP-$\bar{r}/(xr)$}  \\ \cline{2-5}
         & Delay (std.)  & Delivery (std.) & Delay (std.) & Delivery (std.) \\ \hline
        Ideal & 278.3 (106.1) & 81.2\% (8.7\%) & \textbf{155.2} (80.9) & \textbf{90.8\%} (6.1\%) \\ \hline
        Neighbor & 331.8 (103.8) & 75.7\% (9.1\%) & \textbf{282.5} (87.5) & \textbf{78.5\%} (7.8\%) \\ \hline
	\end{tabular}
\end{table}

Lastly, we run 100 test instances on 10 draws of random networks with 100 nodes and node mobility, in which for every 100 time steps, $ 10 $ random nodes take a random step 
on the 2D plane, i.e., $\bbkappa \in \reals^2, \bbkappa_{i} \sim \mathbb{N}(0,0.1) $ (normal distribution with mean of zero and standard deviation of $0.01$), conditioned on keeping the network connected. 
The long-term link rate of a newly created link $e$ is set as $r_{e} \sim \mathbb{U}(10,42)$.
We compare the end-to-end delay and delivery rate of our neighborhood bias update rule in \eqref{E:neighbor} with those of ideal (instantaneous) SSSP in Table~\ref{tab:mobility}.
Under the practical neighborhood update rule, SP-$\bar{r}/(xr)$ suffers a larger loss from the ideal (but impractical) SSSP than EDR-$\bar{r}$, i.e., the delivery rate of SP-$\bar{r}/(xr)$  dropped by $12\%$ compared to a drop of $5.6\% $ with EDR-$\bar{r}$.
Despite larger delays and lower delivery rates, EDR-$\bar{r}$ suffers less under node mobility, \update{showing that it is more robust to transient errors in biases}.

\section{CONCLUSIONS}
\label{sec:conclusions}

In this paper, we studied BP routing and scheduling with queue-agnostic shortest path bias, which effectively addresses the drawbacks of BP while retaining its queue stabilizing capability.
Our four enhancements to SP-BP: delay-aware link weight, optimal bias scaling, low-overhead bias maintenance, and \emph{expQ} (a delay-aware backlog metric that works well with SP biases) are demonstrated to be effective in improving the latency and effective throughput of BP schemes.
The benefits of our enhanced SP-BP schemes are limited to well-connected wireless networks with relatively low mobility.
Further investigations are required to enable BP schemes in wireless networks that are intermittently connected or subject to varying levels of node mobility. 
Other potential extensions to this work include considering the uncertainties and/or shifts in link features and their global minimal/maximal values; facilitating the prediction of the link duty cycles by incorporating additional link and traffic features, such as betweenness centralities and flow arrival rates; exploring other GNN architectures \update{such as Graph Attention Networks}, and \update{developing} training techniques \update{for faster convergence and/or continuous online learning using real-time network data}.

\appendices

\section{PROOF OF INEQUALITY}\label{app:inequality}
To prove $(\max[q-o, b]+p)^2\leq q^2+o^2+p^2+b^2+2bp+2q(p-o)$ for $q\geq0, o\geq0, b\geq 0, p\geq 0$, consider two cases:

Case 1: $q-o \geq b$, we have
\begin{align*}
    (\max[q-o, b]+p)^2 = & (q-o+p)^2 \\
    = & q^2 + o^2 + p^2 + 2qp - 2qo - 2op \\
    \leq & q^2 + o^2 + p^2 + 2q(p - o) \\
    \leq & q^2 + o^2 + p^2 + b^2 + 2bp + 2q(p - o), 
\end{align*}
since $2op\geq 0$, $b^2 + 2bp \geq 0$. 

Case 2: $q-o < b$, we have
\begin{align*}
    (\max[q-o, b]+p)^2 = & (b+p)^2 \\
    = & b^2 + p^2 + 2bp \\
    \leq & (q-o)^2 + 2qp + b^2 + p^2 + 2bp \\
    = & q^2 + o^2 + p^2 + b^2 + 2bp + 2q(p - o), 
\end{align*}
since $(q-o)^2 + 2qp \geq 0$.

With Case 1 and Case 2, we prove the inequality.

\update{
\section{JUSTIFICATION OF LIGHTWEIGHT TRAFFIC SETTING}\label{app:frate}
For all streaming flow setting, given the test setting in Section~\ref{sec:results}, in each test instance, the expected number of source nodes is $0.4|\ccalV|$, each with an average arrival rate of $0.6=(0.2+1.0)/2$, the average exogenous packet arrival rate per node is computed as $ 0.24=0.6\times 0.4 $.}

\update{The expected link rate of a bidirectional link is $26=(10+42)/2$. 
With a node density of $8/\pi$, the expected number of immediate neighbors of each node is $8$. 
Given the average conflict degree of each link as $34.6$, the average maximum outflow rate per node is computed as $3\approx(26\times 8\times 0.5)/34.6$, assuming an even split of link rate for both directions.
}

\bibliographystyle{ieeetr}

\bibliography{strings,refs}

\begin{IEEEbiography}
[{\includegraphics[width=1in,height=1.25in,clip,keepaspectratio]{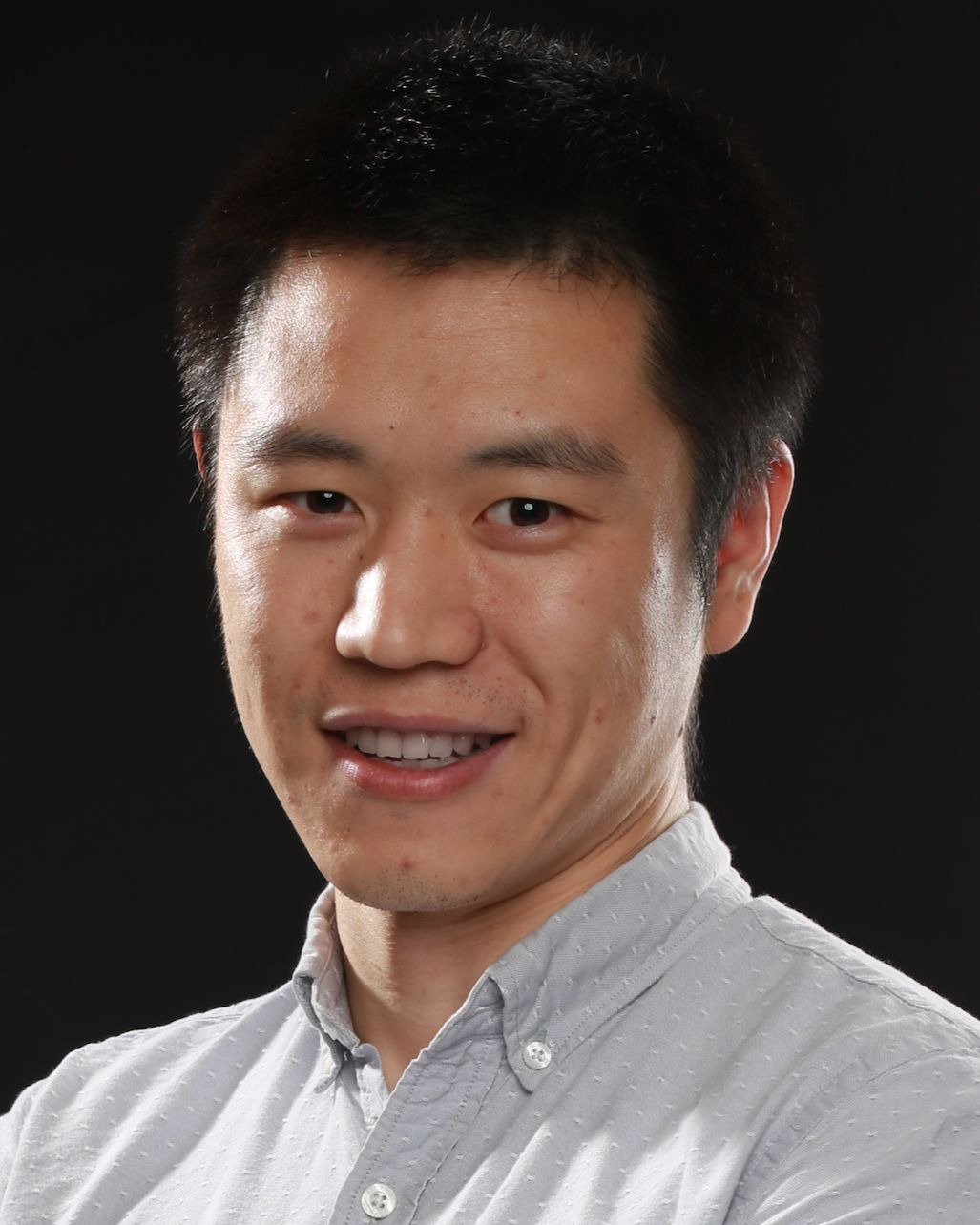}}]{Zhongyuan Zhao} (S'13--M'18) received his B.Sc. and M.S. degrees in Electronic Engineering from the University of Electronic Science and Technology of China, Chengdu, China, in 2006 and 2009, respectively. He received his Ph.D. degree in Computer Engineering from the University of Nebraska-Lincoln, Lincoln, NE, in 2019, under the guidance of Prof. Mehmet C. Vuran. From 2009 to 2013, he worked for ArrayComm and Ericsson, respectively, as an engineer developing 4G base-station. He joined Rice University in 2019 as a postdoctoral research associate at the Department of Electrical and Computer Engineering, advised by Prof. Santiago Segarra. Currently, he is a Research Assistant Professor at Rice University. Dr. Zhao’s current research interests focus on machine learning and signal processing for wireless communications and networking.
\end{IEEEbiography}

\begin{IEEEbiography}
[{\includegraphics[width=1in,height=1.25in,clip,keepaspectratio]{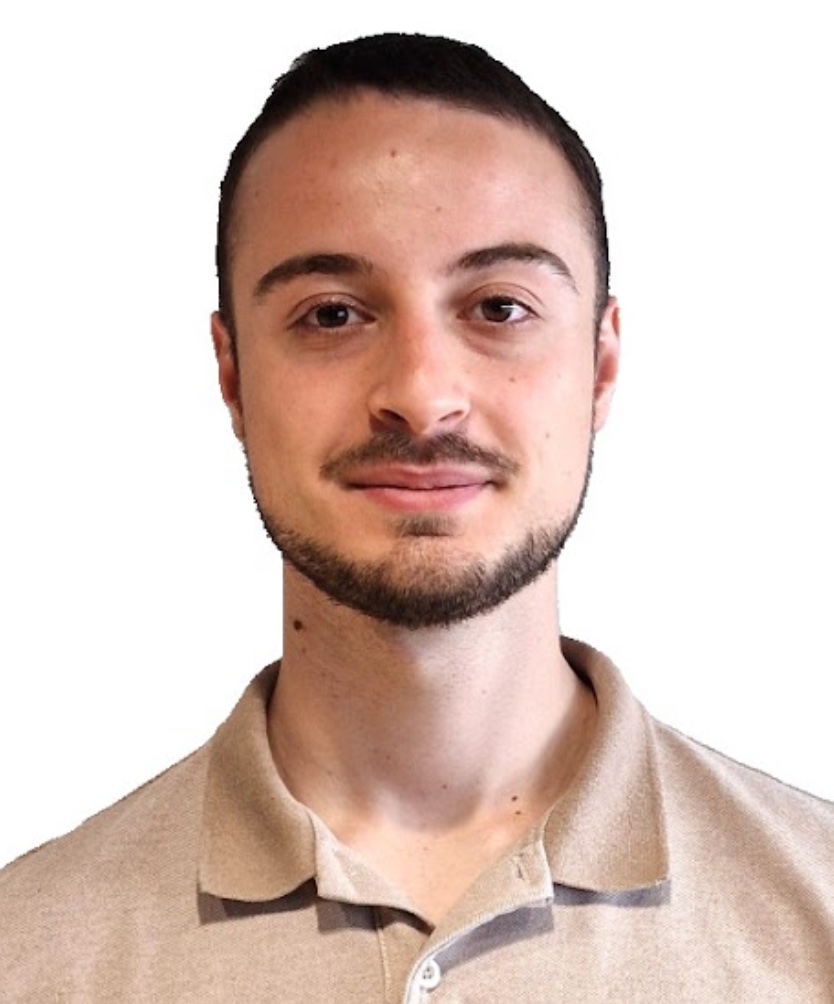}}]{Bojan Radojičić}
received the Bachelor's degree in Information Engineering from the Faculty of Technical Sciences, University of Novi Sad, Novi Sad, Serbia in 2023, where he is currently a master student in Engineering Management.
During his studies, he won two first places in the Artificial Intelligence competition organized by BEST in Serbia.
He was a Visiting Research Fellow at Rice University in 2022, and did internships at Prime Software Serbia, Serbia in 2023, KION group in Frankfurt, Germany in 2024, in the fields of Research, Finance, Business and Data Science.
He currently works as an Associate Product Manager Intern at Google in London, UK.
\end{IEEEbiography}

\begin{IEEEbiography}
[{\includegraphics[width=1in,height=1.25in,clip,keepaspectratio]{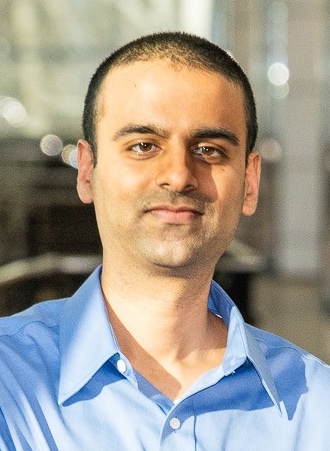}}]{Gunjan Verma}
received the B.S. degree in mathematics, computer science, and economics from Rutgers University, the M.A. degree in computational biology from Duke University, and the M.S. degree in mathematics from Johns Hopkins University. He is currently a Computer Scientist with the U.S. Army Research Laboratory (ARL), Adelphi, MD, USA. His research interests include Bayesian statistics and machine learning and their application to problems in networking, communications, and robotics.
\end{IEEEbiography}


\begin{IEEEbiography}
[{\includegraphics[width=1in,height=1.25in,clip,keepaspectratio]{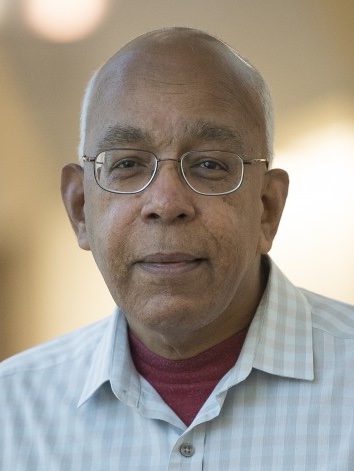}}]{Ananthram Swami}
(Life Fellow, IEEE) received the B.Tech. degree from IIT-Bombay, the M.S. degree from Rice University, and the Ph.D. degree from the University of Southern California (USC), all in electrical engineering. He is currently with the U.S. Army's DEVCOM Army Research Laboratory (ARL) as the Army’s Senior Research Scientist (ST) for Network Science, and is an ARL fellow. Prior to joining ARL, he held positions with Unocal Corporation, USC, and CS-3. He has held visiting faculty positions at INP, Toulouse, and Imperial College, London.  His recent awards include the 2018 IEEE ComSoc MILCOM Technical Achievement Award and the 2017 Presidential Rank Award (Meritorious).
\end{IEEEbiography}

\begin{IEEEbiography}
[{\includegraphics[width=1in,height=1.25in,clip,keepaspectratio]{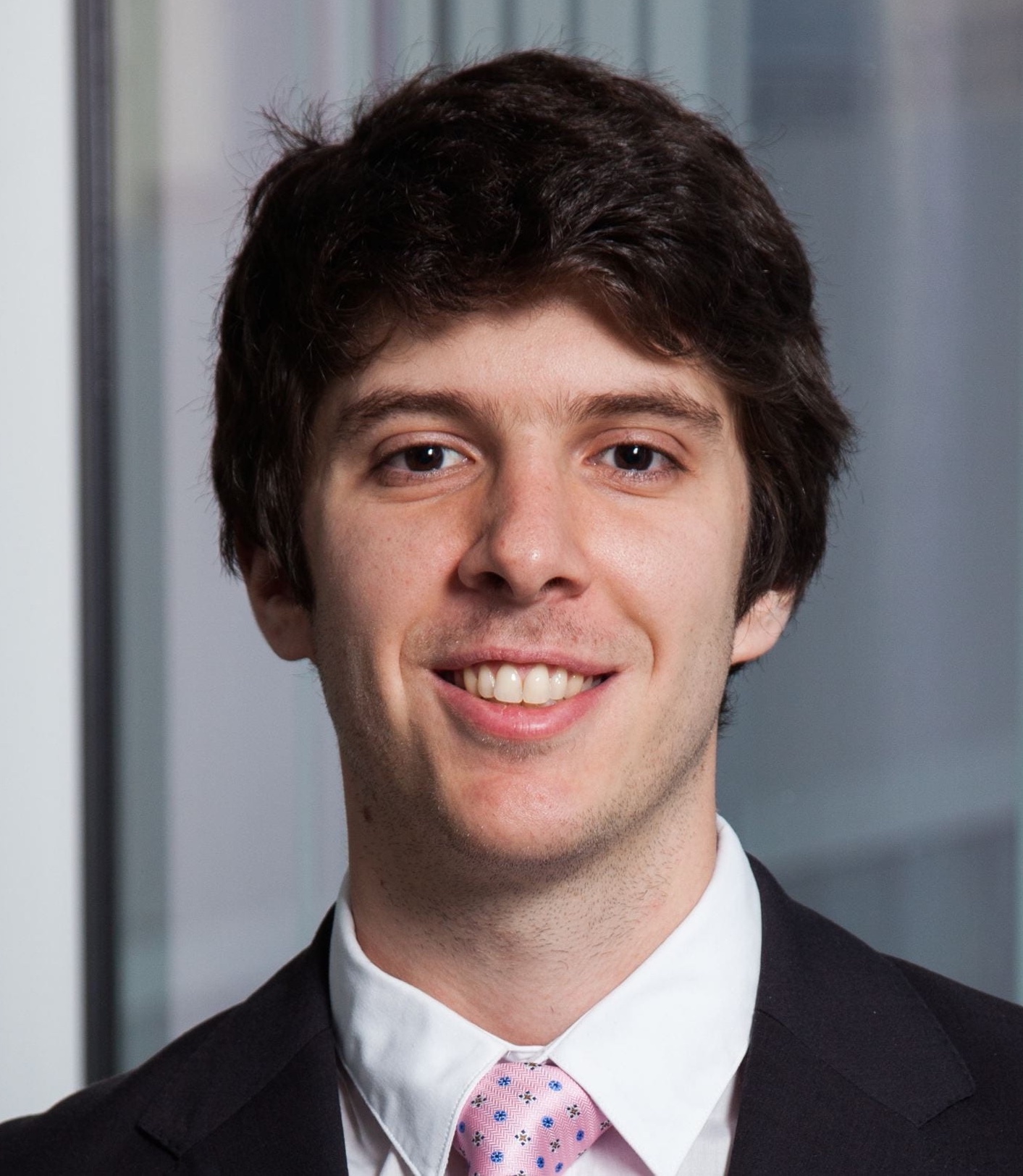}}]{Santiago Segarra} (Senior Member, IEEE) received the B.Sc. degree (Hons.) (Valedictorian) in industrial engineering from the Instituto Tecnológico de Buenos Aires (ITBA), Argentina, in 2011, the M.Sc. in electrical engineering from the University of Pennsylvania (Penn), Philadelphia, in 2014 and the Ph.D. degree in electrical and systems engineering from Penn in 2016. From September 2016 to June 2018 he was a postdoctoral research associate with the Institute for Data, Systems, and Society at the Massachusetts Institute of Technology. 
In 2018, Dr. Segarra joined the Department of Electrical and Computer Engineering, Rice University as an Assistant Professor, and is currently an Associate Professor. He also holds courtesy appointments with the Departments of Computer Science and Statistics. His research interests include network theory, data analysis, machine learning, and graph signal processing. He received the 2011 Outstanding Graduate Award granted by the National Academy of Engineering of Argentina, the 2017 Penn’s Joseph and Rosaline Wolf Award for Best Doctoral Dissertation in Electrical and Systems Engineering, the 2020 IEEE Signal Processing Society Young Author Best Paper Award, the 2021 Rice’s School of Engineering Research + Teaching Excellence Award, three early career awards (NSF CAREER, ARO ECP, and ARI Early Career), and five best conference paper awards. Since 2022, he has dedicated part of his time consulting for Microsoft Research.
\end{IEEEbiography}

\end{document}